  \setlist[itemize]{noitemsep,topsep=0pt}
\renewcommand{\subparagraph}[1]{\noindent\textbf{#1}~~}
\newcommand{\TD}{\mathrm{TD}}
\newcommand{\D}{\mathrm{D}}
\renewcommand{\eword}[1][]
{\ifthenelse{\equal{#1}{}}{\varepsilon}{\varepsilon_{#1}}}
  \title{Derived-Term Automata of\\
    Multitape Rational Expressions\\
    (Long version\thanks{This report is an extended
      version of the paper published in CIAA 2016 under the same name.})}
  \date{\scriptsize (\SvnDate \SvnRev)}
  \title{Derived-Term Automata of\\
    Multitape Rational Expressions}
\author{Akim Demaille \email{akim@lrde.epita.fr}}
\institute{EPITA Research and Development Laboratory (LRDE)\\
  14-16, rue Voltaire, 94276 Le Kremlin-Bic\^etre, France}
\begin{document}
\mymaketitle

\begin{abstract}
  We consider (weighted) rational expressions to denote series over
  Cartesian products of monoids.  We define an operator~$\tuple$ to build
  multitape expressions such as $(a^+\tuple x + b^+\tuple y)^*$.  We
  introduce \dfn{expansions}, which generalize the concept of derivative of
  a rational expression, but relieved from the need of a free monoid.  We
  propose an algorithm based on expansions to build multitape automata from
  multitape expressions.
\end{abstract}

\ifthen{\boolean{long}}{
  \begin{description}
  \item[Changes:]~\par
    \begin{description}[noitemsep,topsep=0pt]
    \item[2016-07-25] \cref{sec:derivatives} was added, showing how to
      compute the constant term and the derivatives for the tuple operator.
      \cref{sec:related} was adapted accordingly.
    \end{description}
  \end{description}

  \mytableofcontents
}

\section{Introduction}

Automata and rational (or regular) expressions share the same expressive
power, with algorithms going from one to the other.  This fact made rational
expressions an extremely handy practical tool to specify some rational
languages in a concise way, from which acceptors (automata) are built.
There are many largely used implementations, probably starting with Ken
Thompson \citep{thompson.68.cacm}, the creator of Unix, grep, etc.

There are numerous algorithms to build an automaton from an expression.  We
are particularly interested in the derivative-based family of algorithms
\citep{brzozowski.64.jacm, antimirov.1996.tcs, lombardy.2005.tcs,
  caron.2011.lata.2, demaille.16.arxiv}, because they offer a very natural interpretation to
states (they are labeled by an expression that denotes the future of the
states, i.e., the language/series accepted from this state).  This allowed
to support several extensions: extended operators (intersection, complement)
\citep{brzozowski.64.jacm, caron.2011.lata.2}, weights
\citep{lombardy.2005.tcs}, additional products (shuffle, infiltration), etc.

Multitape automata, including transducers, share many properties with
``single-tape'' automata, in particular the Fundamental Theorem
\citep[Theorem~2.1, p.~409]{sakarovitch.09.eat}: under appropriate
conditions, multitape automata and rational (multitape) series share the
same expressive power.  However, as far as the author knows, there is no
definition of multitape rational expressions that allows expressions such as
$\Ed_2 \coloneqq \EdTwo$ (\cref{ex:e2}).  To denote such a binary relation
between words, one had to build a (usual) rational expression in ``normal
form'', without tupling of expressions but only tuples of letters such as a
set of generators.  So for instance instead of $\Ed_2$, one must use
$\Ed_2' \coloneqq \paren{(a\tuple \eword)^+(\eword\tuple x) +
  (b\tuple\eword)^+(\eword\tuple y)}^*$,
which is larger, as is its derived-term automaton.

The contributions of this paper are twofold: we define (weighted) multitape
rational expressions featuring a $\tuple$ operator, and we provide an
algorithm to build an equivalent automaton.  This algorithm is a
generalization of the derived-term based algorithms, freed from the
requirement that the monoid is free.

\longskip

We first settle the notations in \cref{sec:notations}, provide an algorithm
to compute the expansion of an expression in \cref{sec:expa-of-expr}, which
is used in \cref{sec:expaton} to propose an alternative construction of the
derived-term automaton.

\longskip

The constructs exposed in this paper are implemented in
\vcsn\footnote{\label{foot:url}See the interactive environment,
  \url{http://vcsn-sandbox.lrde.epita.fr}, or its documentation,
  \url{http://vcsn.lrde.epita.fr/dload/2.3/notebooks/expression.derived_term.html},
  or this paper's companion notebook,
  \url{http://vcsn.lrde.epita.fr/dload/2.3/notebooks/CIAA-2016.html}.}.
\vcsn is a free-software platform dedicated to weighted automata and
rational expressions \citep{demaille.13.ciaa.2}; its lowest layer is a \Cxx
library, on top of which Python/IPython bindings provide an interactive
graphical environment.

\section{Notations}
\label{sec:notations}

\newcommand{\OB}[2]{\overbrace{#2}^{\text{#1}}}
\newcommand{\POB}[2]{\OB{\makebox[0pt]{#1}}{\vphantom{\bra{2}}#2}}
\newcommand{\UB}[2]{\underbrace{#2}_{\text{#1}}}
\newcommand{\PUB}[2]{\UB{\makebox[0pt]{#1}}{\vphantom{\bra{2}}#2}}
\newcommand{\textstack}[2]{
  \begin{tabular}{c}
    #1\\#2
  \end{tabular}
}

Our purpose is to define (weighted) multitape rational expressions, such as
$\Ed_1 \coloneqq \EdOne$ (weights are written in angle brackets).  It
relates $ade$ with $x$, with weight 4.  We introduce an algorithm to build a
multitape automaton (aka \dfn{transducer}) from such an expression, e.g.,
\cref{fig:aut:e1}.  This algorithm relies on \emph{rational expansions}.
They are to the derivatives of rational expressions what differential forms
are to the derivatives of functions.  Defining expansions requires several
concepts, defined bottom-up in this section.  The following figure presents
these different entities, how they relate to each other, and where we are
heading to: given a weighted multitape rational expression such as $\Ed_1$,
compute \emph{its} expansion:
\begin{align*}
\UB{Expansion (\cref{sec:expa})}
{
  \PUB{~~~~~~~~~~~Constant term}
  {
    \vphantom{\PUB{Dummy}{\POB{Weight}{\biggl[bra{5}\biggr]}}}
    \POB{Weight}{\bra{5}}
  }
  \;
  \oplus
  \PUB{Proper part of the expansion}
  {
    \UB{First}{
      \vphantom{\biggl[\biggr]}\POB{Label}{a|x}
    }
    \odot
    \biggl[
      \Lmul{2}{\PUB{Derived term}{\vphantom{\biggl[\biggr]}\POB{\textstack{Expression}{(\cref{sec:expr})}}{ce^*|y}}}
      \;\oplus\;\;
      \POB{Monomial}{
        \Lmul{4}{de^*|\und}
        }
    \biggr]
    \;\;\oplus\;\;
    b|x \odot
    \biggl[
      \OB{Polynomial (\cref{sec:poly})}
      {
        \Lmul{6}{\vphantom{\biggl[\biggr]}ce^*|y}
        \oplus
        \Lmul{3}{de^*|\und}
      }
    \biggr]
  }
}
\end{align*}
from which we build its derived-term automaton (\cref{fig:aut:e1}).

\begin{figure}[t]
  \centerline{\includegraphics[scale=.8]{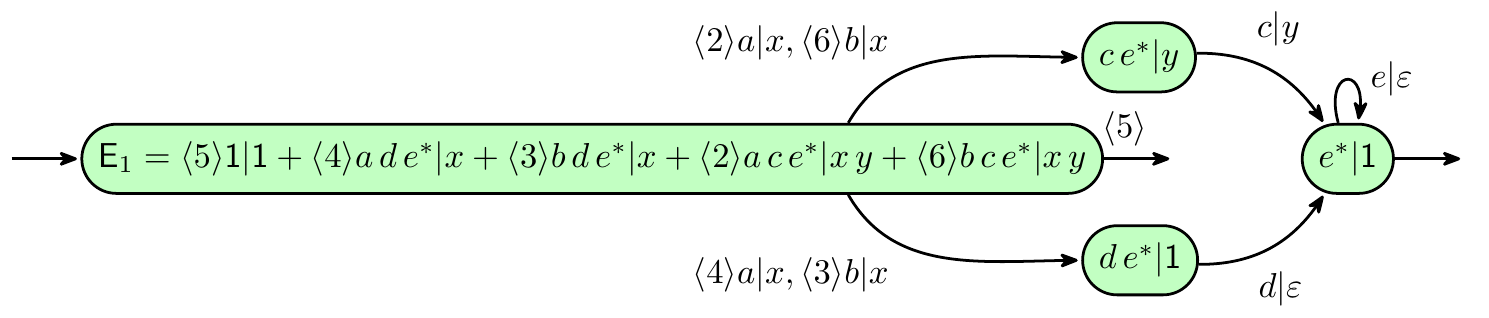}}
  \vspace{-3ex}
  \caption[The derived-term automaton of $\Ed_1$]{The derived-term
    automaton of $\Ed_1$ (see \cref{ex:e1,ex:e1:xpn,ex:e1:aut}) with\\
    $\Ed_1 \coloneqq \EdOne$.}
  \label{fig:aut:e1}
\end{figure}

It is helpful to think of expansions as a normal form for expressions.

\subsection{Rational Series}

Series will be used to define the semantics of the forthcoming structures:
they are to weighted automata what languages are to Boolean automata. Not
all languages are rational (denoted by an expression), and similarly, not
all series are rational (denoted by a weighted expression).  We follow
\citet[Chap.~III]{sakarovitch.09.eat}.

In order to cope with (possibly) several tapes, we cannot rely on the
traditional definitions based on the free monoid $A^*$ for some alphabet
$A$.

\subparagraph{Labels}%
Let $M$ be a monoid (e.g., $A^*$ or $A^* \times B^*$), whose neutral element
is denoted $\eword[M]$, or $\eword$ when clear from the context.  For
consistency with the way transducers are usually represented, we use
$m \tuple n$ rather than $(m, n)$ to denote the pair of $m$ and $n$.  For
instance $\eword[A^* \times B^*] = \eword[A^*] \tuple \eword[B^*]$, and
$\eword[M] \tuple a \in M \times \{a\}^*$.  A \dfn{set of generators} $G$ of
$M$ is a subset of $M$ such that $G^* = M$.  A monoid $M$ is of \dfn{finite
  type} (or \dfn{finitely generated}) if it admits a finite set of
generators.  A monoid $M$ is \dfn{graded} if it admits a \dfn{gradation}
function $\length{\cdot} \in M \rightarrow \N$ such that
$\forall m, n \in M$, $\length{m} = 0$ iff $m = \eword$, and
$\length{mn} = \length{m} + \length{n}$.  Cartesian products of graded
monoids are graded, and Cartesian products of finitely generated monoids are
finitely generated.  Free monoids and Cartesian products of free monoids are
graded and finitely generated.

\subparagraph{Weights}%
Let $\bra{\K, +, \cdot, \zeK, \unK}$ (or $\K$ for short) be a semiring whose
(possibly non commutative) multiplication will be denoted by juxtaposition.
$\K$ is \dfn{commutative} if its multiplication is.
$\K$ is a \dfn{topological semiring} if it is equipped with a topology, and
both addition and multiplication are continuous.  It is \dfn{strong} if the
product of two summable families is summable.

\subparagraph{Series}%
A (formal power) \dfn{series} over $M$ with \dfn{weights} (or
\dfn{multiplicities}) in $\K$ is a map from $M$ to $\K$.  The weight of
$m \in M$ in a series $s$ is denoted $s(m)$.  The \dfn{null} series,
$m \mapsto \zeK$, is denoted $0$; for any $m \in M$ (including $\eword[M]$),
$m$ denotes the series $u \mapsto \unK \text{ if $u = m$}, \zeK \text{
  otherwise}$.
If $M$ is of finite type, then we can define the Cauchy product of series.
$s \cdot t \coloneqq m \mapsto \sum_{u, v\in M \mid u v = m} s(u) \cdot
t(v)$.
Equipped with the pointwise addition
($s + t \coloneqq m \mapsto s(m) + t(m)$) and $\cdot$ as multiplication, the
set of these series forms a semiring denoted
$\bra{\SRk{M}, +, \cdot, 0, \eword}$.

The \dfn{constant term} of a series $s$, denoted $s_{\eword}$, is
$s(\eword)$, the weight of the empty word.  A series $s$ is \dfn{proper} if
$s_{\eword} = \zeK$. The \dfn{proper part} of $s$ is the proper series $s_p$
such that $s = s_{\eword} + s_p$.

\subparagraph{Star}%
The \dfn{star} of a series is an infinite sum:
$s^* \coloneqq \sum_{n\in\N}s^n$.  To ensure semantic soundness, we need $M$
to be graded monoid and $\K$ to be a strong topological semiring.

\begin{proposition}
  \label{prop:dev}
  Let $M$ be a graded monoid and $\K$ a strong topological semiring. Let
  $s \in \SRk{M}$, $s^*$ is defined iff $s_{\eword}^*$ is defined and then
  $s^* = s_{\eword}^* + s_{\eword}^*s_ps^*$.
\end{proposition}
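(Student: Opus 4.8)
The plan is to unwind the definition $s^* = \sum_{n \in \N} s^n$ by separating, in each power $s^n$, the contributions that come entirely from the constant term $s_\eword$ from those that involve at least one factor of the proper part $s_p$. Writing $s = s_\eword + s_p$ and expanding $s^n = (s_\eword + s_p)^n$ distributively, every term is a product of $n$ factors, each being either $s_\eword$ (a scalar, i.e. a weight times $\eword$) or $s_p$. The single term with no $s_p$ factor is $s_\eword^n$; every other term contains a first (leftmost) occurrence of $s_p$, preceded by some number $k \ge 0$ of $s_\eword$ factors and followed by an arbitrary word in $s$. Summing the purely-constant terms over all $n$ gives $\sum_n s_\eword^n = s_\eword^*$, and summing the rest, grouped by the position of the first $s_p$, gives $\bigl(\sum_{k} s_\eword^k\bigr) s_p \bigl(\sum_{m} s^m\bigr) = s_\eword^* s_p s^*$. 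Hence $s^* = s_\eword^* + s_\eword^* s_p s^*$, provided all these rearrangements are legitimate.

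Concretely I would proceed as follows. First, establish the scalar case: $s_\eword$ is a series supported on $\{\eword\}$, so its star reduces to the star $s_\eword^*$ in $\K$ (viewing $s_\eword$ interchangeably as an element of $\K$ and as the series $s_\eword \cdot \eword$), and $s^*$ can only be defined if this $\K$-star exists. Second, for the ``if'' direction, assume $s_\eword^*$ is defined in $\K$; since $\K$ is a strong topological semiring and $M$ is graded, argue that the family $\{\text{all distributed terms of all } s^n\}$ is summable: the key point is that for a proper series $s_p$, the support of $s_p^{\,j}$ lives in $\{m : \length{m} \ge j\}$ by the gradation axiom $\length{mn} = \length{m} + \length{n}$ and $\length{m} = 0 \iff m = \eword$, so for each fixed $m \in M$ only finitely many $j$ contribute to $s^*(m)$, and the weight coefficient at $m$ is a finite $\K$-combination of powers of $s_\eword$ — which converges because $s_\eword^*$ does. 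Third, invoke the strong-semiring hypothesis to justify that this summable family may be regrouped (associativity/commutativity of infinite summation) into the two blocks described above, yielding the identity. Fourth, for the ``only if'' direction, note that $s^*(\eword)$ already equals $s_\eword^*$ (the only way to obtain $\eword$ as a product is from $\eword$'s), so if $s^*$ is defined then so is $s_\eword^*$.

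The main obstacle is the second step: making precise why the doubly-indexed family of distributed terms is summable and why the regrouping is valid. This is exactly where the hypotheses ``$M$ graded'' and ``$\K$ strong topological'' are needed — gradedness controls the support of powers of the proper part (guaranteeing local finiteness, so that each coefficient $s^*(m)$ is a genuinely finite sum in $\K$ up to the convergent tail coming from $s_\eword$), and the strong-semiring property is what lets us multiply and rearrange the summable families $\sum_k s_\eword^k$, $s_p$, and $\sum_m s^m$ without altering the sum. Once the summability bookkeeping is in place, the algebraic identity $s^* = s_\eword^* + s_\eword^* s_p s^*$ falls out of the leftmost-$s_p$ decomposition; everything else is routine.
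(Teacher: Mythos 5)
Your argument is correct in outline but takes a genuinely different route from the paper's. The paper does not re-derive the denesting identity from first principles: it cites a known proposition of Sakarovitch asserting that $s^*$ is defined iff $s_{\eword}^*$ is, and that then $s^* = s_{\eword}^*(s_p s_{\eword}^*)^*$; the stated formula follows in two lines by unrolling the outer star once via $t^* = \eword + tt^*$ and re-nesting. You instead expand $(s_{\eword}+s_p)^n$ and group by the leftmost occurrence of $s_p$, using gradedness for local finiteness and strongness for rearrangement. Your route is self-contained and makes visible exactly why each hypothesis is needed, where the paper's is a short corollary of a cited theorem; the cost is that your second step carries all of the content of that cited result, and as written it contains one inaccuracy: the coefficient $s^*(m)$ is \emph{not} a finite $\K$-combination of powers of $s_{\eword}$. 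Gradedness bounds only the number of proper factors (by $\length{m}$); arbitrarily many $s_{\eword}$ factors may still be interleaved between them, non-commutatively, so each coefficient is itself an infinite sum of terms $s_{\eword}^{k_0} s_p(m_1) s_{\eword}^{k_1} \cdots s_p(m_j) s_{\eword}^{k_j}$, whose convergence requires the strong hypothesis (summability of products of the summable families $(s_{\eword}^{k})_{k}$) and not merely the existence of $s_{\eword}^*$. Since you do invoke strongness for the regrouping in your third step, this is repairable, but the summability of each individual coefficient must be argued there as well rather than taken for granted in step two.
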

\begin{proof}
  By \citep[Prop.~2.6, p.~396]{sakarovitch.09.eat} $s^*$ is defined iff
  $s_{\eword}^*$ is defined and then
  $s^* = (s_{\eword}^*s_p)^*s_{\eword}^* = s_{\eword}^*(s_ps_{\eword}^*)^*$.  The
  result then follows directly from $s^* = \eword + ss^*$:
  $s^* = s_{\eword}^*(s_ps_{\eword}^*)^* = s_{\eword}^*(\eword +
  (s_ps_{\eword}^*)(s_ps_{\eword}^*)^*) = s_{\eword}^* +
  s_{\eword}^*s_p(s_{\eword}^*(s_ps_{\eword}^*)^*) = s_{\eword}^* +
  s_{\eword}^*s_ps^*$.\qed
\end{proof}

\subparagraph{Tuple}%
We suppose $\K$ is commutative.  The \dfn{tupling} of two series
$s \in \SRk{M}, t \in \SRk{N}$, is the series
$s \tuple t \coloneqq m \tuple n \in M \times N \mapsto s(m)t(n)$.  It is a
member of $\SRk{M \times N}$.

\begin{proposition}
  \label{prop:series}
  For all series $s, s' \in \SRk{M}$ and $t, t' \in \SRk{N}$,
  $(s+s')\tuple t = s\tuple t + s'\tuple t$ and
  $s\tuple (t+t') = s\tuple t + s\tuple t'$.
\end{proposition}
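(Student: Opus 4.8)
The plan is to prove both identities by a direct pointwise computation, since both the tupling operator and the sum of series are defined pointwise, so two series over $M \times N$ are equal as soon as they assign the same weight to every element of $M \times N$. The first observation is that, by definition of the Cartesian product, every element of $M \times N$ can be written uniquely as $m \tuple n$ with $m \in M$ and $n \in N$; hence it suffices to check the claimed equalities on arguments of this form.

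For the first identity, I would evaluate the left-hand side at $m \tuple n$: by the definition of $\tuple$, $\bigl((s+s')\tuple t\bigr)(m \tuple n) = (s+s')(m)\,t(n)$, and by the definition of the pointwise sum of series this equals $\bigl(s(m)+s'(m)\bigr)\,t(n)$. On the right-hand side, $\bigl(s\tuple t + s'\tuple t\bigr)(m \tuple n) = (s\tuple t)(m \tuple n) + (s'\tuple t)(m \tuple n) = s(m)\,t(n) + s'(m)\,t(n)$. The two results coincide by right distributivity of the product of $\K$ over its addition. Since $m \tuple n$ ranges over all of $M \times N$, the two series are equal. The second identity is entirely symmetric: evaluating both sides at $m \tuple n$ yields $s(m)\,(t+t')(n) = s(m)\,\bigl(t(n)+t'(n)\bigr)$ on one side and $s(m)\,t(n) + s(m)\,t'(n)$ on the other, and these agree by left distributivity in $\K$.

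There is no genuine obstacle here; the proof is a one-line unfolding of definitions on each side. The only points that deserve a moment's care are (i) reducing equality of the two series to equality of weights on elements of the form $m \tuple n$, which is legitimate precisely because $\tuple$ is just pair notation for the Cartesian product, and (ii) invoking the correct one-sided distributive law of the (possibly non-commutative) semiring $\K$ for each of the two identities — note that the commutativity of $\K$ assumed in this subsection, needed elsewhere for tupling to behave well with respect to products, plays no role in establishing distributivity itself.
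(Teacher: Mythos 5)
Your proof is correct and follows essentially the same route as the paper's: evaluate both sides at an arbitrary $m \tuple n \in M\times N$, unfold the pointwise definitions of $+$ and $\tuple$, and conclude by the appropriate one-sided distributivity law of $\K$. Nothing further is needed.
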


\begin{proof}
  Let $m\tuple n \in M\times N$.
  $((s+s')\tuple t)(m\tuple n) = (s+s')(m) \cdot t(n) = (s(m)+s'(m)) \cdot
  t(n) = s(m)\cdot t(n)+s'(m)\cdot t(n) = (s\tuple t)(m\tuple n) \cdot
  (s'\tuple t)(m\tuple n) = (s\tuple t + s'\tuple t)(m\tuple n)$.
  Likewise for right distributivity.  \qed
\end{proof}

From now on, $M$ is a graded monoid of finite type, and $\K$ a commutative
strong topological semiring.

\subsection{Weighted Rational Expressions}
\label{sec:expr}

Contrary to the usual definition, we do not require a finite alphabet: any
set of generators $G \subseteq M$ will do.  For expressions with more than
one tape, we required $\K$ to be commutative; however, for single tape
expressions, our results apply to non-commutative semirings, hence there are
two exterior products.

\newcommand{\lmid}{\;\mid\;}
\begin{definition}[Expression]
  A \dfn{rational expression} $\Ed$ over $G$ is a term built from the
  following grammar, where $a \in G$ denotes any non empty label, and
  $k \in \K$ any weight:
  \begin{math}
    \Ed \Coloneqq \zed
          \lmid \und
          \lmid a
          \lmid \Ed + \Ed
          \lmid \lmul{k}{\Ed}
          \lmid \rmul{\Ed}{k}
          \lmid \Ed \cdot \Ed
          \lmid \Ed^*
          \lmid \Ed \tuple \Ed
  \end{math}.
\end{definition}

Expressions are syntactic; they are finite notations for (some) series.
\begin{definition}[Series Denoted by an Expression]
  Let $\Ed$ be an expression.  The series denoted by $\Ed$, noted
  $\sem{\Ed}$, is defined by induction on $\Ed$:
  \begin{gather*}
    \sem{\zed} \coloneqq 0 \qquad
    \sem{\und} \coloneqq \eword    \qquad
    \sem{a}    \coloneqq a \qquad
    \sem{\Ed+\Fd}       \coloneqq \sem{\Ed} + \sem{\Fd} \qquad
    \sem{\lmul{k}{\Ed}} \coloneqq {k}{\sem{\Ed}}   \\
    \sem{\rmul{\Ed}{k}} \coloneqq {\sem{\Ed}}{k}    \quad
    \sem{\Ed \cdot \Fd} \coloneqq \sem{\Ed} \cdot \sem{\Fd} \quad
    \sem{\Ed^*} \coloneqq \sem{\Ed}^* \quad
    \sem{\Ed \tuple \Fd} \coloneqq \sem{\Ed} \tuple \sem{\Fd}
  \end{gather*}
\end{definition}
An expression is \dfn{valid} if it denotes a series.  More specifically,
there are two requirements.  First, the expression must be well-formed,
i.e., concatenation and disjunction must be applied to expressions of
appropriate number of tapes.  For instance, $a+b|c$ and $a(b|c)$ are
ill-formed, $(a\tuple b)^* \tuple c + a\tuple (b\tuple c)^*$ is well-formed.
Second, to ensure that $\sem{\Fd}^*$ is well defined for each subexpression
of the form $\Fd^*$, the constant term of $\sem{\Fd}$ must be
\emph{starrable} in $\K$ (\cref{prop:dev}).  This definition, which involves
series (semantics) to define a property of expressions (syntax), will be
made effective (syntactic) with the appropriate definition of the constant
term $\ec(\Ed)$ \emph{of an expression} $\Ed$ (\cref{def:expa-of-expr}).

Let $[n]$ denote $\{1, \ldots, n\}$). The \dfn{size} (aka \dfn{length}) of a
(valid) expression $\Ed$, $\length{\Ed}$, is its total number of symbols,
not counting parenthesis; for a given tape number $i\in [k]$ the \dfn{width
  on tape $i$}, $\width{\Ed}_i$, is the number of occurrences of labels on
the tape $i$, the \dfn{width} of $\Ed$ (aka \dfn{literal length}),
$\width{\Ed} \coloneqq \sum_{i\in [k]}\width{\Ed}_i$ is the total number of
occurrences of labels.

Two expressions $\Ed$ and $\Fd$ are \dfn{equivalent} iff
$\sem{\Ed} = \sem{\Fd}$.  Some expressions are ``trivially equivalent''; any
candidate expression will be rewritten via the following \dfn{trivial
  identities}.  Any subexpression of a form listed to the left of a
`$\Rightarrow$' is rewritten as indicated on the right.
\begin{gather*}
  \Ed+\zed  \Rightarrow \Ed
  \ee
  \zed+\Ed  \Rightarrow \Ed
  \\
  \begin{aligned}[t]
    \lmul{\zeK}{\Ed} & \Rightarrow \zed &
    \lmul{\unK}{\Ed} & \Rightarrow \Ed  &
    \lmul{k}{\zed}   & \Rightarrow \zed &
    \lmul{k}{\lmul{h}{\Ed}} &\Rightarrow \lmul{kh}{\Ed}
    \\
    \rmul{\Ed}{\zeK} & \Rightarrow \zed &
    \rmul{\Ed}{\unK} & \Rightarrow  \Ed &
    \rmul{\zed}{k}   & \Rightarrow \zed &
    \rmul{\rmul{\Ed}{k}}{h}  &\Rightarrow \rmul{\Ed}{kh}
  \end{aligned}\\
  \rmul{(\lmul{k}{\Ed})}{h} \Rightarrow \lmul{k}{(\rmul{\Ed}{h})} \ee
  \rmul{\ell}{k} \Rightarrow \lmul{k}{\ell}
  \\ 
  \Ed \cdot \zed  \Rightarrow \zed \ee
  \zed \cdot \Ed  \Rightarrow \zed
  \\
  (\lmulq{k}{\und}) \cdot \Ed   \Rightarrow  \lmulq{k}{\Ed}
  \ee
  \Ed \cdot (\lmulq{k}{\und})   \Rightarrow  \rmulq{\Ed}{k}
  \\ 
  \zed^\star \Rightarrow \und
  \\
  (\lmulq{k}{\Ed}) \tuple (\lmulq{h}{\Fd}) \Rightarrow \lmulq{kh}{\Ed\tuple\Fd}
\end{gather*}
where $\Ed$ is a rational expression, $\ell \in G \cup \{\und\}$ a label,
$k, h\in \K$ weights, and $\lmulq{k}{\ell}$ denotes either $\lmul{k}{\ell}$,
or $\ell$ in which case $k = \unK$ in the right-hand side of $\Rightarrow$.
The choice of these identities is beyond the scope of this paper (see
\cite{sakarovitch.09.eat}), however note that they are limited to trivial
properties; in particular \dfn{linearity} (``weighted ACI'': associativity,
commutativity and
$\lmul{k}{\Ed} + \lmul{h}{\Ed} \Rightarrow \lmul{k+h}{\Ed}$) is not
enforced.  In practice, additional identities help reducing the automaton
size \citep{owens.2009.jfp}.

\subsection{Rational Polynomials}
\label{sec:poly}

At the core of the idea of ``partial derivatives'' introduced by
\citet{antimirov.1996.tcs}, is that of \emph{sets} of rational expressions,
later generalized in \emph{weighted sets} by \citet{lombardy.2005.tcs},
i.e., functions (partial, with finite domain) from the set of rational
expressions into $\K \setminus \{\zeK\}$.  It proves useful to view such
structures as ``polynomials of expressions''.  In essence, they capture the
linearity of addition.

\begin{definition}[Rational Polynomial]
  A \dfn{polynomial} (of rational expressions) is a finite (left) linear
  combination of expressions.  Syntactically it is a term built from the
  grammar
  \begin{math}
    \Pd \Coloneqq 0
    \mid \Lmul{k_1}{\Ed_1} \oplus \cdots \oplus \bra{k_n} \odot \Ed_n
  \end{math}
  where $k_i\in \K\setminus \{\zeK\}$ denote \emph{non-null} weights, and
  $\Ed_i$ denote \emph{non-null} expressions.  Expressions may not appear
  more than once in a polynomial.  A \dfn{monomial} is a pair
  $\bra{k_i} \odot \Ed_i$.
\end{definition}

We use specific symbols ($\odot$ and $\oplus$) to clearly separate the outer
polynomial layer from the inner expression layer.
Let $\Pd = \bigoplus_{i \in [n]}\Lmul{k_i}{\Ed_i}$ be a polynomial of
expressions.  The ``\dfn{projection}'' of $\Pd$ is the expression
$\expr{\Pd} \coloneqq \lmul{k_1}{\Ed_1} + \cdots + \lmul{k_n}{\Ed_n}$ (or
$\zed$ if $\Pd$ is null); this operation is performed on a canonical form of
the polynomial (expressions are sorted in a well defined order).
Polynomials denote series: $\sem{\Pd} \coloneqq \sem{\expr{\Pd}}$.  The
\dfn{terms} of $\Pd$ is the set
$\exprs{\Pd} \coloneqq \{\Ed_1, \ldots, \Ed_n\}$.

\begin{example}
  \label{ex:e1}
  Let $\Ed_1 \coloneqq \EdOne$.  Polynomial
  `$\Pd_{1,a\tuple x} \coloneqq \Lmul{2}{ce^*\tuple y} \oplus
  \Lmul{4}{de^*\tuple \und}$'
  has two monomials: `$\Lmul{2}{ce^*\tuple y}$' and
  `$\Lmul{4}{de^*\tuple \und}$'.  It denotes the (left) quotient of
  $\sem{\Ed_1}$ by $a\tuple x$, and
  `$\Pd_{1,b\tuple x} \coloneqq \Lmul{6}{ce^*\tuple y} \oplus
  \Lmul{3}{de^*\tuple \und}$' the quotient by $b\tuple x$.
\end{example}

Let
$\Pd = \bigoplus_{i \in [n]}\Lmul{k_i}{\Ed_i}, \Qd = \bigoplus_{j \in
  [m]}\Lmul{h_i}{\Fd_i}$
be polynomials, $k$ a weight and $\Fd$ an expression, all possibly null, we
introduce the following operations:
\begin{gather*}
  \Pd\cdot\Fd \coloneqq \bigoplus_{i \in [n]} \Lmul{k_i}{(\Ed_i\cdot\Fd)}
  \e
  \lmul{k}{\Pd} \coloneqq \bigoplus_{i \in [n]} \Lmul{kk_i}{\Ed_i}
  \e
  \rmul{\Pd}{k} \coloneqq \bigoplus_{i \in [n]} \Lmul{k_i}{(\rmul{\Ed_i}{k})}
  \\
  \Pd \tuple \und \coloneqq
  \bigoplus_{i \in [n]} \Lmul{k_i}{\Ed_i \tuple \und}
  \qquad
  \und \tuple \Pd \coloneqq
  \bigoplus_{i \in [n]} \Lmul{k_i}{\und \tuple \Ed_i}
  \\
  \Pd \tuple \Qd \coloneqq
  \bigoplus_{(i,j) \in [n] \times [m]} \Lmul{k_i\cdot h_j}{\Ed_i \tuple \Fd_j}
\end{gather*}
Trivial identities might simplify the result.
Note the asymmetry between left and right exterior products.  The addition
of polynomials is commutative, multiplication by zero (be it an expression
or a weight) evaluates to the null polynomial, and the left-multiplication
by a weight is distributive.

\begin{lemma}
  \label{lem:poly:ops}%
  \begin{math}
    \sem{\Pd\cdot\Fd} = \sem{\Pd} \cdot \sem{\Fd}
    \ee
    \sem{\lmul{k}{\Pd}} = \lmul{k}{\sem{\Pd}}
    \ee
    \sem{\rmul{\Pd}{k}} = \rmul{\sem{\Pd}}{k}
  \end{math}
  \\
  \begin{math}
    \sem{\Pd \tuple \Qd} = \sem{\Pd} \tuple \sem{\Qd}
  \end{math}.
\end{lemma}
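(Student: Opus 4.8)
The plan is to push each identity down to the level of expressions through the defining equation $\sem{\Pd} = \sem{\expr{\Pd}}$, and then to conclude with the semiring axioms of $\SRk{M}$ — in particular distributivity of the Cauchy product and of the two exterior products over series addition — together with \cref{prop:series} for the tupling case. Two preliminary remarks make the bookkeeping painless. First, every rewriting performed while forming these polynomials — the trivial identities, and the fusion of two monomials sharing a common expression into one whose weight is their sum — preserves the denoted series (the latter because left-multiplication by a weight is distributive over series addition), and the canonical reordering used by $\expr{\cdot}$ is harmless since series addition is commutative and associative; hence we may freely reason with the formal sums displayed in the definitions of the polynomial operations. Second, the degenerate cases are immediate: if $\Pd$ is the null polynomial, or $\Fd = \zed$, or $k = \zeK$, then the operation returns the null polynomial, whose semantics $0$ matches $\sem{\Pd}\cdot 0$, $\zeK\,\sem{\Pd}$, and so on.

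For the first three identities, write $\Pd = \bigoplus_{i\in[n]}\Lmul{k_i}{\Ed_i}$, so that $\sem{\Pd} = \sum_{i\in[n]} k_i\,\sem{\Ed_i}$ by unfolding $\expr{\cdot}$ and $\sem{\cdot}$ on sums and left products. From $\Pd\cdot\Fd = \bigoplus_i \Lmul{k_i}{(\Ed_i\cdot\Fd)}$ and $\sem{\Ed_i\cdot\Fd} = \sem{\Ed_i}\cdot\sem{\Fd}$ one gets $\sem{\Pd\cdot\Fd} = \sum_i k_i(\sem{\Ed_i}\cdot\sem{\Fd})$; associativity of the left scalar action and right-distributivity of $\cdot$ over $+$ in $\SRk{M}$ then turn this into $\bigl(\sum_i k_i\sem{\Ed_i}\bigr)\cdot\sem{\Fd} = \sem{\Pd}\cdot\sem{\Fd}$. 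The identities for $\lmul{k}{\Pd}$ and $\rmul{\Pd}{k}$ follow by the same computation, using $k(k_i s) = (kk_i)s$ together with left-distributivity for the former, and $(k_i s)k = k_i(sk)$ together with right-distributivity for the latter.

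For the tupling identity, write also $\Qd = \bigoplus_{j\in[m]}\Lmul{h_j}{\Fd_j}$, so that $\Pd\tuple\Qd = \bigoplus_{(i,j)}\Lmul{k_i h_j}{(\Ed_i\tuple\Fd_j)}$ and hence $\sem{\Pd\tuple\Qd} = \sum_{(i,j)} (k_i h_j)(\sem{\Ed_i}\tuple\sem{\Fd_j})$, using $\sem{\Ed_i\tuple\Fd_j} = \sem{\Ed_i}\tuple\sem{\Fd_j}$. Checking on the weight of an arbitrary $m\tuple n$ and using commutativity of $\K$, one has $(k_i h_j)(s\tuple t) = (k_i s)\tuple(h_j t)$, so the sum becomes $\sum_{(i,j)}(k_i\sem{\Ed_i})\tuple(h_j\sem{\Fd_j})$; collapsing this double sum one index at a time, via the two distributivity laws of \cref{prop:series}, factors it as $\bigl(\sum_i k_i\sem{\Ed_i}\bigr)\tuple\bigl(\sum_j h_j\sem{\Fd_j}\bigr) = \sem{\Pd}\tuple\sem{\Qd}$. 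The lemma as a whole is essentially bookkeeping; the one step that deserves attention is this last rearrangement, since \cref{prop:series} supplies distributivity of $\tuple$ only over a binary sum on each side, so one proceeds by induction on $n$ and $m$, and it is exactly the commutativity of $\K$ that lets the scalars $k_i$ and $h_j$ be separated onto the two tapes. Everything else is routine.
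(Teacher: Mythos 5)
Your proof is correct and follows essentially the same route as the paper's: the paper likewise dismisses the first three identities as straightforward and, for the tupling case, unfolds the double sum $\sum_{(i,j)}\lmul{k_i h_j}{(\sem{\Ed_i}\tuple\sem{\Fd_j})}$ and factors it into $\bigl(\sum_i \lmul{k_i}{\sem{\Ed_i}}\bigr)\tuple\bigl(\sum_j \lmul{h_j}{\sem{\Fd_j}}\bigr)$. You merely spell out the one step the paper leaves implicit — the pointwise identity $(k_i h_j)(s\tuple t)=(k_i s)\tuple(h_j t)$ via commutativity of $\K$ and the induction extending \cref{prop:series} from binary sums — which is a faithful elaboration rather than a different argument.
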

\begin{longenv}
  \begin{proof}
    See \cref{proof:lem:poly:ops}.
  \end{proof}
\end{longenv}

\subsection{Rational Expansions}
\label{sec:expa}

\begin{definition}[Rational Expansion]
  A \dfn{rational expansion} $\Xd$ is a term
  \begin{math}
    \Xd \Coloneqq \bra{\Xd_{\eword}} \oplus a_1 \odot[\Xd_{a_1}] \oplus \cdots \oplus a_n \odot[\Xd_{a_n}]
  \end{math}
  where $\Xd_{\eword} \in \K$ is a weight (possibly null),
  $a_i \in G\setminus \{\eword\}$ non-empty labels (occurring at most once),
  and $\Xd_{a_i}$ non-null polynomials.  The \dfn{constant term} is
  $\Xd_{\eword}$, the \dfn{proper part} is
  $\Xd_p \coloneqq a_1 \odot[\Xd_{a_1}] \oplus \cdots \oplus a_n
  \odot[\Xd_{a_n}]$,
  the \dfn{firsts} is $f(\Xd) \coloneqq \{a_1, \ldots, a_n\}$ (possibly
  empty) and the \dfn{terms}
  $\exprs{\Xd} \coloneqq \bigcup_{i\in [n]}\exprs{\Xd_{a_i}}$.
\end{definition}
To ease reading, polynomials are written in square brackets.  Contrary to
expressions and polynomials, there is no specific term for the null
expansion: it is represented by $\bra{\zeK}$, the null weight.  Except for
this case, null constant terms are left implicit.  Expansions will be
written:
\begin{math}
  \Xd = \bra{\Xd_{\eword}} \oplus \bigoplus_{a \in f(\Xd)} a \odot[\Xd_a]
\end{math}.  When more convenient, we write $\Xd(\ell)$ instead of
$\Xd_{\ell}$ for $\ell \in f(\Xd) \cup \{\eword\}$.

An expansion $\Xd$ can be ``projected'' as a rational expression
$\expr{\Xd}$ by mapping weights, labels and polynomials to their
corresponding rational expressions, and $\oplus$/$\odot$ to the
sum/concatenation of expressions.  Again, this is performed on a canonical
form of the expansion: labels are sorted.  Expansions also denote series:
$\sem{\Xd} \coloneqq \sem{\expr{\Xd}}$.  An expansion $\Xd$ is
\dfn{equivalent} to an expression $\Ed$ iff $\sem{\Xd} = \sem{\Ed}$.

\begin{example}[\cref{ex:e1} continued]
  \label{ex:e1:xpn}
  Expansion
  $\Xd_1 \coloneqq \bra{5} \oplus a|x \odot [\Pd_{1,a|x}] \oplus b|x \odot
  [\Pd_{1,b|x}]$
  has $\Xd_1(\eword) = \bra{5}$ as constant term, and maps the generator
  $a|x$ (resp.\ $b|x$) to the polynomial $\Xd_1(a|x) = \Pd_{1,a|x}$ (resp.\
  $\Xd_1(b|x) = \Pd_{1,b|x}$).  $\Xd_1$ can be proved to be equivalent to
  $\Ed_1$.
\end{example}

Let $\Xd, \Yd$ be expansions, $k$ a weight, and $\Ed$ an expression (all
possibly null):
\begin{gather}
  \label{eq:epn:plus:epn}
  \Xd \oplus \Yd
  \coloneqq
  \bra{\Xd_{\eword}+ \Yd_{\eword}}\oplus\bigoplus_{\mathclap{a \in f(\Xd) \cup f(\Yd)}} a \odot [\Xd_a \oplus \Yd_a]
  \\
  \lmul{k}{\Xd} \coloneqq \bra{k\Xd_{\eword}} \oplus \bigoplus_{\mathclap{a \in f(\Xd)}} a \odot [\lmul{k}{\Xd_a}]
  \ee
  \rmul{\Xd}{k} \coloneqq \bra{\Xd_{\eword} k} \oplus \bigoplus_{\mathclap{a \in f(\Xd)}} a \odot [\rmul{\Xd_a}{k}]
  \\
  \label{eq:epn:mul:epn}
  \Xd\cdot\Ed \coloneqq \bigoplus_{\mathclap{a \in f(\Xd)}} a \odot [\Xd_a \cdot \Ed] \ee \text{with $\Xd$ proper: $\Xd_{\eword} = \zeK$}
  \\
  \label{eq:epn:tuple:epn}
  \begin{split}
  \Xd\tuple\Yd &\coloneqq
  \bra{\Xd_{\eword} \Yd_{\eword}}
  \oplus \bra{\Xd_{\eword}} \bigoplus_{\mathclap{b\in f(\Yd)}} (\eword| b) \odot (\und \tuple \Yd_b)
  \oplus \bra{\Yd_{\eword}} \bigoplus_{\mathclap{a\in f(\Xd)}} (a| \eword) \odot (\Xd_a \tuple \und)
\\
  & \quad \oplus \bigoplus_{\mathclap{a|b \in f(\Xd) \times f(\Yd)}}(a| b) \odot (\Xd_a \tuple \Yd_b)
  \end{split}
\end{gather}
Since by definition expansions never map to null polynomials, some firsts
might be smaller that suggested by these equations.  For instance in
$\mathbb{Z}$ the sum of $\bra{1} \oplus a \odot [\Lmul{1}{b}]$ and
$\bra{1} \oplus a \odot [\Lmul{-1}{b}]$ is $\bra{2}$.

The following lemma is simple to establish: lift semantic equivalences, such
as \cref{prop:series}, to syntax, using \cref{lem:poly:ops}.
\begin{lemma}
  \label{lem:xpn:semantics}%
  \begin{math}
    \sem{\Xd \oplus \Yd} = \sem{\Xd} + \sem{\Yd} \ee
    \sem{\lmul{k}{\Xd}} = \lmul{k}{\sem{\Xd}} \ee
    \sem{\rmul{\Xd}{k}} = \rmul{\sem{\Xd}}{k}
  \end{math}\\
  \begin{math}
    \sem{\Xd\cdot\Ed} = \sem{\Xd}\cdot\sem{\Ed}
    \ee
    \sem{\Xd \tuple \Yd} = \sem{\Xd} \tuple \sem{\Yd}
  \end{math}
\end{lemma}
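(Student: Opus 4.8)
The plan is to verify each of the five identities by unfolding the definition of the corresponding expansion operation (Equations \eqref{eq:epn:plus:epn}–\eqref{eq:epn:tuple:epn}), pushing the semantic bracket $\sem{\cdot}$ through the term structure via the homomorphism properties of $\expr{\cdot}$, and then invoking the already-established polynomial identities of \cref{lem:poly:ops} together with the series-level facts \cref{prop:series} and the definition of $\sem{\cdot}$ on expressions. The key observation that makes everything routine is that, for an expansion $\Xd = \bra{\Xd_\eword} \oplus \bigoplus_{a\in f(\Xd)} a\odot[\Xd_a]$, the definition $\sem{\Xd} \coloneqq \sem{\expr{\Xd}}$ unfolds to $\sem{\Xd} = \Xd_\eword \cdot \eword + \sum_{a\in f(\Xd)} \sem{a}\cdot\sem{\Xd_a}$ in $\SRk{M}$, i.e.\ the proper part of $\sem{\Xd}$ is $\sum_{a\in f(\Xd)} a\cdot\sem{\Xd_a}$ and its constant term is $\Xd_\eword$; this is exactly the series-level decomposition into proper part plus constant term, so the expansion operations are designed to mirror series operations summand by summand.

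First I would dispatch the three easy cases. For $\sem{\Xd\oplus\Yd}$: unfold the definition of $\oplus$ on expansions, apply $\sem{\cdot}$, use that $\sem{\Xd_a\oplus\Yd_a} = \sem{\Xd_a}+\sem{\Yd_a}$ (the additive clause of \cref{lem:poly:ops}, extended over the union of firsts by treating absent $\Xd_a$ or $\Yd_a$ as the null polynomial), and regroup the constant terms and the per-label sums to recover $\sem{\Xd}+\sem{\Yd}$; the fact that some firsts may collapse when a polynomial sum vanishes is harmless since the null polynomial denotes $0$. For $\sem{\lmul{k}{\Xd}}$ and $\sem{\rmul{\Xd}{k}}$: identical bookkeeping, now using the exterior-product clauses of \cref{lem:poly:ops} on each $\Xd_a$ and the corresponding weight identities $\sem{\lmul{k}{\Ed}} = k\sem{\Ed}$, plus — for the right case — commutativity/associativity of the semiring multiplication only where the trivial identities already license it. For $\sem{\Xd\cdot\Ed}$: since $\Xd$ is assumed proper, $\sem{\Xd} = \sum_{a\in f(\Xd)} a\cdot\sem{\Xd_a}$, and unfolding \eqref{eq:epn:mul:epn} gives $\sem{\Xd\cdot\Ed} = \sum_{a\in f(\Xd)} a\cdot\sem{\Xd_a\cdot\Ed} = \sum_{a\in f(\Xd)} a\cdot(\sem{\Xd_a}\cdot\sem{\Ed})$ by \cref{lem:poly:ops}, which is $\left(\sum_{a\in f(\Xd)} a\cdot\sem{\Xd_a}\right)\cdot\sem{\Ed} = \sem{\Xd}\cdot\sem{\Ed}$ by associativity and distributivity of the Cauchy product in $\SRk{M}$.

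The main obstacle is the tuple identity $\sem{\Xd\tuple\Yd} = \sem{\Xd}\tuple\sem{\Yd}$, because \eqref{eq:epn:tuple:epn} has four groups of summands and one must check that they partition the four-way product $\sem{\Xd}\tuple\sem{\Yd} = (\Xd_\eword\cdot\eword + \sum_a a\cdot\sem{\Xd_a})\tuple(\Yd_\eword\cdot\eword + \sum_b b\cdot\sem{\Yd_b})$ correctly. Here I would expand the right-hand side using bilinearity of $\tuple$ (\cref{prop:series}, extended to finite sums by induction) into: a constant-constant term $\Xd_\eword\Yd_\eword\cdot(\eword\tuple\eword)$; a constant-times-proper term $\Xd_\eword\sum_b(\eword\tuple b)\cdot(\eword\tuple\sem{\Yd_b})$; the symmetric proper-times-constant term; and a proper-proper term $\sum_{a,b}(a\tuple b)\cdot(\sem{\Xd_a}\tuple\sem{\Yd_b})$. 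The point requiring care is that $\eword\tuple\eword = \eword[M\times N]$, that $\sem{a\tuple\eword} = a\tuple\eword$ is a generator of $M\times N$ with $(m\tuple n)\cdot(a\tuple\eword) = (ma)\tuple n$ (and likewise on the other tape), and that $\sem{\Xd_a\tuple\und} = \sem{\Xd_a}\tuple\eword$ by \cref{lem:poly:ops}; matching these four groups against the four displayed summands of \eqref{eq:epn:tuple:epn}, and using that the generators $\eword|b$, $a|\eword$, $a|b$ appearing in distinct groups are pairwise distinct in $G(M\times N)$ so no merging occurs, completes the identification. Finally I would note that these five identities, once proved, are exactly what is needed for the inductive soundness proof of the expansion algorithm in \cref{sec:expa-of-expr}.
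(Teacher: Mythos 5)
Your proposal is correct and follows exactly the route the paper indicates for this lemma: lift the semantic equivalences (\cref{prop:series}) to syntax via \cref{lem:poly:ops}, unfolding each expansion operation and pushing $\sem{\cdot}$ through; the paper gives no more detail than that one-line recipe, and your elaboration (including the careful four-way matching in the tuple case and the handling of vanishing polynomial sums) fills it in soundly.
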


\subsection{Finite Weighted Automata}

\begin{definition}[Weighted Automaton]
  A \dfn{weighted automaton} $\Ac$ is a tuple $\bra{M, G, \K, Q, E, I, T}$
  where:
  \begin{itemize}
  \item $M$ is a monoid,
  \item $G$ (the labels) is a set of generators of $M$,
  \item $\K$ (the set of weights) is a semiring,
  \item $Q$ is a finite set of states,
  \item $I$ and $T$ are the \dfn{initial} and \dfn{final} functions
    from $Q$ into $\K$,
  \item $E$ is a (partial) function from $Q \times G \times Q$ into
    $\K \setminus \{\zeK\}$;

    its domain represents the transitions:
    $(\mathit{source}, \mathit{label}, \mathit{destination})$.
  \end{itemize}
  An automaton is \dfn{proper} if no label is $\eword_M$.
\end{definition}

A \dfn{computation}
$p = (q_0, a_0, q_1)(q_1, a_1, q_2)\cdots(q_n, a_n, q_{n+1})$ in an
automaton is a sequence of transitions where the source of each is the
destination of the previous one; its \dfn{label} is
$a_0a_1\cdots a_n \in M$, its \dfn{weight} is
$I(q_0) \otimes E(q_0, a_0, q_1) \otimes \cdots \otimes E(q_n, a_n, q_{n+1})
\otimes T(q_{n+1}) \in \K$.
The \dfn{evaluation} of word $u$ by $\Ac$, $\Ac(u)$, is the sum of the
weights of all the computations labeled by $u$, or $\zeK$ if there are none.
The \dfn{behavior} of an automaton $\Ac$ is the series
$\sem{\Ac} \coloneqq m \mapsto \Ac(m)$.  A state $q$ is \dfn{initial} if
$I(q) \neq \zeK$.  A state $q$ is \dfn{accessible} if there is a computation
from an initial state to $q$.  The \dfn{accessible} part of an automaton
$\Ac$ is the subautomaton whose states are the accessible states of $\Ac$.
The size of an automaton, $\length{\Ac}$, is its number of states.

\smallskip

We are interested, given an expression $\Ed$, in an algorithm to compute an
automaton $\Ac_\Ed$ such that $\sem{\Ac_\Ed} = \sem{\Ed}$
(\cref{def:expaton}).  To this end, we first introduce a simple recursive
procedure to compute \emph{the} expansion of an expression.

\section{Expansion of a Rational Expression}
\label{sec:expa-of-expr}
\begin{definition}[Expansion of a Rational Expression]
  \label{def:expa-of-expr}
  The \dfn{expansion of a rational expression} $\Ed$, written $d(\Ed)$, is
  defined inductively as follows:
  \begin{gather}
    \label{eq:epn:cst}
    d(\zed) \coloneqq \bra{\zeK} \ee
    d(\und) \coloneqq \bra{\unK} \ee
    d(a)    \coloneqq a \odot [\Lmul{\unK}{\und}]
    \\
    \label{eq:epn:add}
    d(\Ed+\Fd) \coloneqq d(\Ed) \oplus d(\Fd)
    \\
    \label{eq:epn:kmul}
    d(\lmul{k}{\Ed}) \coloneqq \lmul{k}{d(\Ed)} \ee
    d(\rmul{\Ed}{k}) \coloneqq \rmul{d(\Ed)}{k}
    \\
    \label{eq:epn:mul}
    d(\Ed \cdot \Fd)  \coloneqq \ep(\Ed)\cdot \Fd \oplus \lmul{\ec(\Ed)}{d(\Fd)}
    \\
    \label{eq:epn:star}
    d(\Ed^*) \coloneqq \bra{\ec(\Ed)^*} \oplus \lmul{\ec(\Ed)^*}{\ep(\Ed) \cdot \Ed^*}
    \\
    \label{eq:epn:tuple}
    d(\Ed\tuple \Fd) \coloneqq d(\Ed) \tuple d(\Fd)
  \end{gather}
  where $\ec(\Ed) \coloneqq d(\Ed)_{\eword}, \ep(\Ed) \coloneqq d(\Ed)_p$ are
  the constant term/proper part of $d(\Ed)$.
\end{definition}

The right-hand sides are indeed expansions.  The computation
trivially terminates: induction is performed on strictly smaller
subexpressions.  These formulas are enough to compute the expansion of an
expression; there is no secondary process to compute the firsts --- indeed
$d(a) \coloneqq a \odot [\Lmul{\unK}{\und}]$ suffices and every other case
simply propagates or assembles the firsts --- or the constant terms.
\begin{longenv}

  Of course, in an implementation, a single recursive call to $d(\Ed)$ is
  performed for \cref{eq:epn:mul,eq:epn:star}, from which $\ec(\Ed)$ and
  $\ep(\Ed)$ are obtained, and additional expansions are computed only when
  needed.  So they should rather be written:
  \begin{align*}
    d(\Ed\cdot \Fd)
    & \coloneqq
      \mathtt{let}\;
      \Xd = d(\Ed)\;\mathtt{in}\;
      \mathtt{if}\; \bra{\Xd_{\eword}} \ne \zeK
      \;\mathtt{then}\;
      \Xd_p \cdot \Fd \oplus \lmul{\Xd_{\eword}}{d(\Fd)}
      \;\mathtt{else}\;
      \Xd_p \cdot \Fd
    \\
    d(\Ed^*)
    & \coloneqq
      \mathtt{let}\;
      \Xd = d(\Ed)\;\mathtt{in}\;
      \bra{\Xd_{\eword}^*}
      \oplus \lmul{\Xd_{\eword}^*}{\Xd_p \cdot \Ed^*}
  \end{align*}
  Besides, existing expressions should be referenced to, not
  duplicated.  In the previous piece of code, $\Ed^*$ is not built again,
  the input argument is reused.

\end{longenv}
Note that the firsts are a subset of the labels of the expression, hence of
$G\setminus \{\eword\}$.  In particular, no first includes $\eword$.

\begin{proposition}
  \label{prop:xpn:equiv}
  The expansion of a rational expression is equivalent to the expression.
\end{proposition}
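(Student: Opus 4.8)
The plan is to prove $\sem{d(\Ed)} = \sem{\Ed}$ by structural induction on $\Ed$, exactly following the seven clauses of \cref{def:expa-of-expr}. The engine of the argument is \cref{lem:xpn:semantics}, which says that the syntactic operations on expansions ($\oplus$, left/right scalar product, $\cdot\,\Ed$, $\tuple$) denote the corresponding semantic operations on series; combined with the induction hypotheses this reduces each inductive step to a pure identity on series. A preliminary remark to record once: for any expression $\Ed$, writing $\Xd = d(\Ed)$, we have $\sem{\Xd} = \sem{\Xd_{\eword}} + \sem{\Xd_p} = \ec(\Ed) + \sem{\ep(\Ed)}$, so once the induction hypothesis gives $\sem{d(\Ed)} = \sem{\Ed}$, the constant term $\ec(\Ed)$ equals $\sem{\Ed}_{\eword}$ and $\sem{\ep(\Ed)}$ is the proper part of $\sem{\Ed}$. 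This is what lets the series-level identities in \cref{prop:dev} be invoked syntactically.

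The base cases \cref{eq:epn:cst} are immediate: $\sem{\bra{\zeK}} = 0 = \sem{\zed}$, $\sem{\bra{\unK}} = \eword = \sem{\und}$, and $\sem{a \odot [\Lmul{\unK}{\und}]} = a \cdot \eword = a = \sem{a}$. For $+$ \cref{eq:epn:add} and the two scalar products \cref{eq:epn:kmul}, apply \cref{lem:xpn:semantics} and the induction hypotheses directly. For the product \cref{eq:epn:mul}: by \cref{lem:xpn:semantics} and induction, $\sem{\ep(\Ed)\cdot\Fd \oplus \lmul{\ec(\Ed)}{d(\Fd)}} = \sem{\ep(\Ed)}\sem{\Fd} + \ec(\Ed)\sem{\Fd}$, and by the preliminary remark this is $(\sem{\Ed}_p + \sem{\Ed}_{\eword})\sem{\Fd} = \sem{\Ed}\sem{\Fd} = \sem{\Ed\cdot\Fd}$. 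For the star \cref{eq:epn:star}: similarly the right-hand side denotes $\sem{\Ed}_{\eword}^* + \sem{\Ed}_{\eword}^*\,\sem{\Ed}_p\,\sem{\Ed^*}$, which equals $\sem{\Ed}^* = \sem{\Ed^*}$ by \cref{prop:dev} (the validity hypothesis on $\Ed^*$ guarantees $\sem{\Ed}_{\eword}^*$ is defined, so the proposition applies). For the tuple \cref{eq:epn:tuple}, \cref{lem:xpn:semantics} gives $\sem{d(\Ed)\tuple d(\Fd)} = \sem{d(\Ed)}\tuple\sem{d(\Fd)} = \sem{\Ed}\tuple\sem{\Fd} = \sem{\Ed\tuple\Fd}$ using the induction hypotheses.

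I expect the only real subtlety — and hence the step to state carefully — to be the star case, because it is the one place where a nontrivial series identity (not mere distributivity) is needed, and where the hypothesis that the expression is \emph{valid} is actually used: one must check that $\ec(\Ed)^*$ being defined (which, by the induction hypothesis, coincides with $\sem{\Ed}_{\eword}^*$ being defined) is exactly the premise of \cref{prop:dev}. Everything else is bookkeeping: matching each clause of \cref{def:expa-of-expr} to the corresponding clause of \cref{lem:xpn:semantics} and invoking the induction hypothesis. One should also note in passing that the trivial identities applied while forming the expansions preserve denoted series (they are equivalences by construction), so rewriting does not disturb the argument.
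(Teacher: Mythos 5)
Your proposal is correct and follows essentially the same route as the paper: structural induction on $\Ed$, with \cref{lem:xpn:semantics} discharging the constants, sum, scalar-product and tuple cases, the same rearrangement for concatenation, and \cref{prop:dev} for the star. Your preliminary remark (that under the induction hypothesis $\ec(\Ed)$ and $\sem{\ep(\Ed)}$ are the constant term and proper part of $\sem{\Ed}$, since the firsts exclude $\eword$ and $M$ is graded) merely makes explicit a step the paper leaves implicit when invoking \cref{prop:dev}.
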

\begin{proof}
  We prove that $\sem{d(\Ed)} = \sem{\Ed}$ by induction on the expression.
  The equivalence is straightforward for
  \cref{eq:epn:cst,eq:epn:add,eq:epn:kmul,eq:epn:tuple}, viz.,
  \newcommand{\since}[1]{$ (#1) $}
  \begin{math}
    \sem{d(\Ed \tuple \Fd)}
    = \sem{d(\Ed) \tuple d(\Fd)}        \since{by \cref{eq:epn:tuple}}
    = \sem{d(\Ed)} \tuple \sem{d(\Fd)}  \since{by \cref{lem:xpn:semantics}}
    = \sem{\Ed} \tuple \sem{\Fd}        \since{by induction hypothesis}
    = \sem{\Ed\tuple \Fd}               \since{by \cref{lem:xpn:semantics}}
  \end{math}.
  The case of multiplication, \cref{eq:epn:mul}, follows from:
  \begin{align*}
    \sem{d(\Ed \cdot \Fd)}
    &= \sem{\ep(\Ed) \cdot \Fd \oplus \bra{\ec(\Ed)} \cdot d(\Fd)} &
    &= \sem{\ep(\Ed)}\cdot\sem{\Fd} + \bra{\ec(\Ed)} \cdot \sem{d(\Fd)} \\
    &= \sem{\ep(\Ed)}\cdot\sem{\Fd} + \bra{\ec(\Ed)} \cdot \sem{\Fd} &
    &= \paren{\sem{\bra*{\ec(\Ed)}}+\sem{\ep(\Ed)}} \cdot\sem{\Fd} \\
    &= \sem{\bra{\ec(\Ed)} + \ep(\Ed)} \cdot\sem{\Fd} &
    &= \sem{d(\Ed)} \cdot \sem{\Fd} \\
    &= \sem{\Ed} \cdot \sem{\Fd} &
    &= \sem{\Ed\cdot \Fd}
  \end{align*}
  It might seem more natural to exchange the two terms (i.e.,
  $\bra{\ec(\Ed)} \cdot d(\Fd) \oplus \ep(\Ed)\cdot \Fd$), but an
  implementation first computes $d(\Ed)$ and then computes $d(\Fd)$
  \emph{only if} $\ec(\Ed) \ne \zeK$.
  The case of Kleene star, \cref{eq:epn:star}, follows from \cref{prop:dev}.
  \qed
\end{proof}

\section{Expansion-Based Derived-Term Automaton}
\label{sec:expaton}


\vspace{-1ex}
\begin{definition}[Expansion-Based Derived-Term Automaton]
  \label{def:expaton}
  The \dfn{derived-term automaton} of an expression $\Ed$ over $G$ is the
  \emph{accessible part} of the automaton
  ${\Ac_\Ed} \coloneqq \bra{M, G, \K, Q, E, I, T}$ defined as follows:
  \begin{itemize}
  \item $Q$ is the set of rational expressions on alphabet $A$ with weights
    in $\K$,
  \item $I = \Ed \mapsto \unK$,
  \item
    $E(\Fd, a, \Fd') = k \text{ iff } a \in f(d(\Fd)) \;\mathrm{and}\;
    \lmul{k}{\Fd'} \in d(\Fd)(a)$,
  \item $T(\Fd) = k$ iff $\bra{k} = d(\Fd)(\eword)$.
  \end{itemize}
\end{definition}

Since the firsts exclude $\eword$, this automaton is proper.  It is
straightforward to extract an algorithm from \cref{def:expaton}, using a
work-list of states whose outgoing transitions to
compute\ifthen{\boolean{long}}{ (see \cref{sec:algo})}.  The
\cref{fig:expaton} illustrates the process.  This approach admits a natural
lazy implementation: the whole automaton is not computed at once, but
rather, states and transitions are computed on-the-fly, on demand, for
instance when evaluating a word \citep{demaille.16.arxiv}.  However, we must justify
\cref{def:expaton} by proving that this automaton is finite
(\cref{thm:size}).

\begin{figure}[t]
  \centering
  \includegraphics[scale=.8]{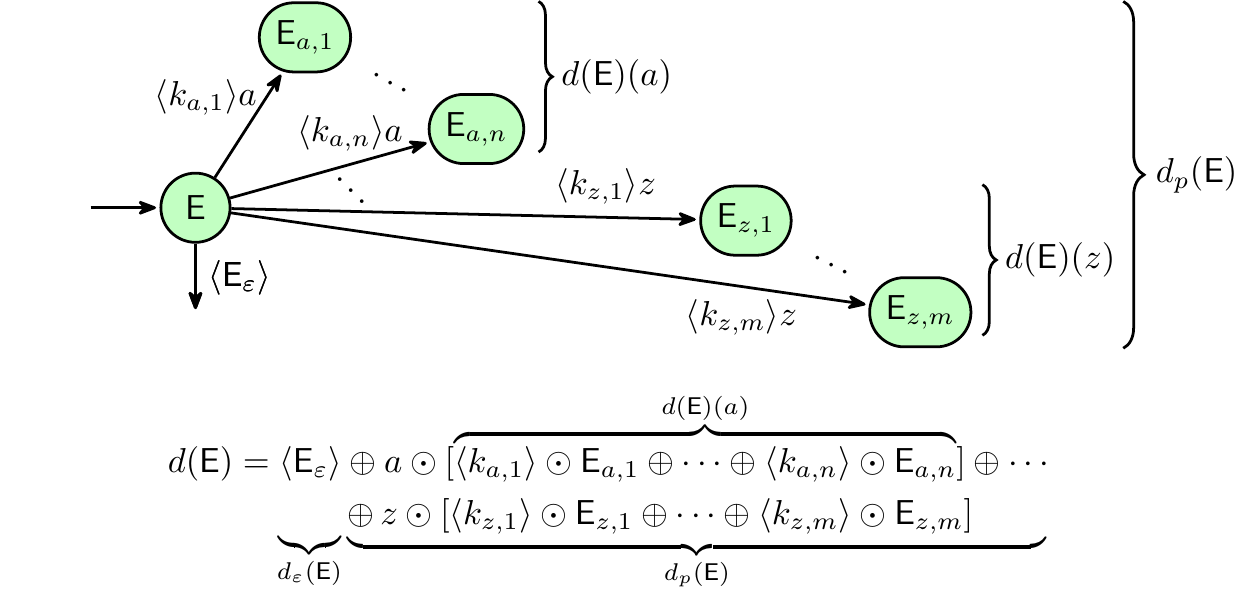}
  \vspace{-2ex}
  \caption[$\Ac_\Ed$]{Initial part of $\Ac_\Ed$, the derived-term automaton
    of $\Ed$.  This figure is somewhat misleading in that some
    $\di{\Ed}{a}{i}$ might be equal to an $\di{\Ed}{z}{j}$, or $\Ed$ (but
    never another $\di{\Ed}{a}{j}$).}
  \label{fig:expaton}
\end{figure}

\begin{example}[\cref{ex:e1,ex:e1:xpn} continued]
  \label{ex:e1:aut}
  With $\Ed_1 \coloneqq \EdOne$, one has:
  \begin{align*}
    d(\Ed_1)
    &= \bra{5}
    \oplus a|x \odot [\Lmul{2}{ce^* \tuple y} \oplus \Lmul{4}{de^* \tuple \eword}]
    \oplus b|x \odot [\Lmul{6}{ce^* \tuple y} \oplus \Lmul{3}{de^* \tuple \eword}]
    \\
    &= \Xd_1 \qquad \text{(from \cref{ex:e1:xpn})}
  \end{align*}
  \cref{fig:aut:e1} shows the resulting derived-term automaton.
\end{example}

\begin{theorem}
  \label{thm:size}
  For any $k$-tape expression $\Ed$,
  $\length{\Ac_\Ed} \le \prod_{i \in [k]}(\width{\Ed}_i+1) + 1$.
\end{theorem}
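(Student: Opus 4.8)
The plan is to bound the number of \emph{accessible} states of $\Ac_\Ed$. I would attach to every rational expression $\Ed$ a finite set of expressions $\TD(\Ed)$ --- its \emph{true derived terms} --- enjoying two properties: \textbf{(i)} every state reachable from $\Ed$ along a non-empty computation belongs to $\TD(\Ed)$, so that $\length{\Ac_\Ed}\le 1+\length{\TD(\Ed)}$; and \textbf{(ii)} $\length{\TD(\Ed)}\le\prod_{i\in[k]}(\width{\Ed}_i+1)-1$. The theorem then follows at once (in fact with a unit to spare).

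I would define $\TD$ by induction on $\Ed$, mirroring \cref{def:expa-of-expr}: $\TD(\zed)=\TD(\und)=\emptyset$; $\TD(a)=\{\und\}$; $\TD(\Ed+\Fd)=\TD(\Ed)\cup\TD(\Fd)$; $\TD(\lmul{k}{\Ed})=\TD(\Ed)$ and $\TD(\rmul{\Ed}{k})=\{\rmul{\Ed'}{k}\mid \Ed'\in\TD(\Ed)\}$; $\TD(\Ed\cdot\Fd)=\{\Ed'\cdot\Fd\mid \Ed'\in\TD(\Ed)\}\cup\TD(\Fd)$; $\TD(\Ed^{*})=\{\Ed'\cdot\Ed^{*}\mid \Ed'\in\TD(\Ed)\}$; and, listing the tapes of $\Ed$ and then those of $\Fd$,
\begin{align*}
  \TD(\Ed\tuple\Fd)={}&\{\Ed'\tuple\und\mid \Ed'\in\TD(\Ed)\}\ \cup\ \{\und\tuple\Fd'\mid \Fd'\in\TD(\Fd)\}\\
  &{}\cup\ \{\Ed'\tuple\Fd'\mid \Ed'\in\TD(\Ed),\ \Fd'\in\TD(\Fd)\},
\end{align*}
all right-hand sides being reduced by the trivial identities. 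For property~\textbf{(i)} I would prove, by structural induction, the stronger pair: \textbf{(a)} $\exprs{d(\Ed)}\subseteq\TD(\Ed)$, and \textbf{(b)} $\TD(\Ed)$ is closed under derivation, i.e.\ $\Fd\in\TD(\Ed)$ implies $\exprs{d(\Fd)}\subseteq\TD(\Ed)$. Each case is obtained by unfolding the corresponding equation of \cref{def:expa-of-expr} together with the companion operations on expansions (\cref{eq:epn:tuple:epn} and its analogues) and reading off the terms of the resulting expansion. Since, by \cref{def:expaton}, the accessible part of $\Ac_\Ed$ is the least set of expressions that contains $\Ed$ and is closed under $\Fd\mapsto\exprs{d(\Fd)}$, \textbf{(a)} and \textbf{(b)} give that every accessible state lies in $\{\Ed\}\cup\TD(\Ed)$.

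Property~\textbf{(ii)} is again proved by induction on $\Ed$. The base cases and the $\lmul{k}{\cdot}$, $\rmul{\cdot}{k}$, ${}^{*}$ cases are immediate, as $\TD$ does not grow while no tape-width decreases. For $\Ed+\Fd$ and $\Ed\cdot\Fd$ the cardinality is at most $\length{\TD(\Ed)}+\length{\TD(\Fd)}$, and it suffices to apply the elementary inequality $\prod_i(x_i+1)+\prod_i(y_i+1)-1\le\prod_i(x_i+y_i+1)$ valid for $x_i,y_i\in\N$ (a one-line induction on $k$, with equality at $k=1$); this is exactly where the ``$-1$'' carried in the induction hypothesis is used. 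For $\Ed\tuple\Fd$, write $m=\length{\TD(\Ed)}$, $n=\length{\TD(\Fd)}$, and let $P,Q$ be the products over the tapes of $\Ed$, resp.\ of $\Fd$; the displayed definition gives $\length{\TD(\Ed\tuple\Fd)}\le m+n+mn=(m+1)(n+1)-1\le PQ-1$, and $PQ=\prod_{i\in[k]}(\width{\Ed\tuple\Fd}_i+1)$ because the tape-widths of $\Ed\tuple\Fd$ are those of $\Ed$ followed by those of $\Fd$.

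The step I expect to be the real work is \textbf{(b)}, closure under derivation, in the concatenation, star, and above all the tuple case. For $\Ed\tuple\Fd$ one must expand $d(\Ed'\tuple\Fd')=d(\Ed')\tuple d(\Fd')$ via \cref{eq:epn:tuple,eq:epn:tuple:epn}, starting from an element of one of the three forms $\Ed'\tuple\und$, $\und\tuple\Fd'$, $\Ed'\tuple\Fd'$, and check that each of the three families of monomials it produces falls back into one of the three pieces of $\TD(\Ed\tuple\Fd)$ --- handling separately the degenerate sub-cases $d(\und\tuple\Fd')$ and $d(\Ed'\tuple\und)$, where one factor is the expansion $\bra{\unK}$ of $\und$ and the unit sums in \cref{eq:epn:tuple:epn} vanish. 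A pervasive lower-level nuisance is that the trivial identities silently rewrite the members of these sets --- $\und\cdot\Ed'\Rightarrow\Ed'$, $\und\tuple\und\Rightarrow\und$, $(\lmulq{k}{\Ed'})\tuple(\lmulq{h}{\Fd'})\Rightarrow\lmulq{kh}{\Ed'\tuple\Fd'}$, and the weight-normalisation rules --- so one has to verify that the sets above remain closed after reduction; since reduction never enlarges $\TD(\Ed)$ this can only sharpen the bound, but it still has to be argued.
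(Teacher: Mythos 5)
Your proposal is correct and follows essentially the same route as the paper: the identical inductive definition of $\TD(\Ed)$ (including the same three-part tuple case), the same two stability lemmas ($\exprs{d(\Ed)}\subseteq\TD(\Ed)$ and closure of $\TD(\Ed)$ under expansion), and the same final count $\length{\Ac_\Ed}\le 1+\length{\TD(\Ed)}$. Your only deviation is carrying the sharper invariant $\length{\TD(\Ed)}\le\prod_{i\in[k]}(\width{\Ed}_i+1)-1$, which is precisely what makes the sum, concatenation and tuple steps of the counting induction close cleanly; this is a refinement of the paper's bookkeeping, not a different approach.
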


\begin{proof}
  \ifthen{\boolean{long}}{The detailed proof is available in
    \cref{sec:dt}. } The proof goes in several steps.  First introduce the
  \dfn{true derived terms} of $\Ed$, a set of expressions noted $\TD(\Ed)$,
  and the \dfn{derived terms} of $\Ed$,
  $\D(\Ed) \coloneqq \TD(\Ed) \cup \{\Ed\}$.  $\TD(\Ed)$ admits a simple
  inductive definition similar to \citep[Def.~3]{angrand.2010.jalc}, to
  which we add
  $\TD(\Ed\tuple \Fd) \coloneqq (\TD(\Ed) \tuple \TD(\Fd)) \cup (\{\und\}
  \tuple \TD(\Fd)) \cup (\TD(\Ed) \tuple \{\und\})$,
  where for two sets of expressions $E, F$ we introduce
  $E \tuple F \coloneqq \{\Ed \tuple \Fd\}_{(\Ed,\Fd)\in E\times F}$.
  Second, verify that
  $\length{\TD(\Ed)} \le \prod_{i \in [k]}(\width{\Ed}_i + 1)$ (hence
  finite).  Third, prove that $\D(\Ed)$ is ``stable by expansion'', i.e.,
  $\forall \Fd \in \D(\Ed), \exprs{d(\Fd)} \subseteq \D(\Ed)$.
  Finally, observe that the states of $\Ac_\Ed$ are therefore members of
  $\D(\Ed)$, whose size is less than or equal to $1 + \length{\TD(\Ed)}$.
  \qed
\end{proof}

\begin{theorem}
  \label{thm:expaton}%
  Any expression $\Ed$ and its expansion-based derived-term automaton
  $\Ac_\Ed$ denote the same series, i.e., $\sem{\Ac_\Ed} = \sem{\Ed}$.
\end{theorem}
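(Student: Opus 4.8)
The plan is to show that each state $\Fd$ of $\Ac_\Ed$, viewed as an expression, has a behavior $\sem{\Ac_\Ed}$ (when taken as the unique initial state) equal to $\sem{\Fd}$. The natural vehicle is the classical ``linear system'' / fixed-point argument: the expansion $d(\Fd)$ reads off exactly the outgoing data of $\Fd$ in $\Ac_\Ed$, namely $T(\Fd) = \ec(\Fd)$ and, for each $a \in f(d(\Fd))$ and each monomial $\Lmul{k}{\Fd'} \in d(\Fd)(a)$, a transition $\Fd \xrightarrow{a\,|\,k} \Fd'$. Combined with \cref{prop:xpn:equiv} ($\sem{d(\Fd)} = \sem{\Fd}$) and the definition of the series denoted by an expansion ($\sem{\Xd} = \bra{\Xd_\eword} + \sum_{a} a \cdot \sem{\Xd_a}$, which unfolds via \cref{lem:xpn:semantics} and \cref{lem:poly:ops}), this gives the key local identity
\[
  \sem{\Fd} = \ec(\Fd) + \sum_{a \in f(d(\Fd))}\; a \cdot \sum_{\Lmul{k}{\Fd'} \in d(\Fd)(a)} k\,\sem{\Fd'}.
\]
In automaton terms: for every state $\Fd$, $\sem{\Fd}$ equals its final weight plus the sum over outgoing transitions $\Fd \xrightarrow{a|k}\Fd'$ of $a \cdot k\,\sem{\Fd'}$. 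This is precisely the statement that the family $(\sem{\Fd})_{\Fd \in \D(\Ed)}$ is a solution of the system of equations associated with $\Ac_\Ed$ (restricted to the accessible part, which by \cref{thm:size} is finite).

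Next I would argue that this solution is the intended one, i.e.\ that it computes $\sem{\Ac_\Ed}$. Here one expands $\sem{\Ac_\Ed}(m)$ as the sum of the weights of all computations labeled $m$ starting at $\Ed$; grading of $M$ is what makes this work. Because $\Ac_\Ed$ is proper (no label is $\eword_M$, as the firsts exclude $\eword$) and $M$ is graded, every computation labeled by $m$ has length at most $\length{m}$, so for fixed $m$ only finitely many computations contribute and the sum is well defined without any summability hypothesis. One then shows by induction on $\length{m}$ that the weight of $m$ in the behavior-from-$\Fd$ coincides with $\sem{\Fd}(m)$: the base case $m = \eword_M$ is $T(\Fd) = \ec(\Fd) = \sem{\Fd}(\eword_M)$, and the inductive step decomposes a computation from $\Fd$ into its first transition $\Fd\xrightarrow{a|k}\Fd'$ (with $a$ a generator, $\length a \ge 1$) followed by a computation from $\Fd'$ labeled by some $m'$ with $a m' = m$, matching term by term the local identity above with the induction hypothesis applied to $\Fd'$ and $m'$ (note $\length{m'} < \length m$). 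Summing over the finitely many first transitions and factorizations $m = a m'$ closes the induction. Applying this with $\Fd = \Ed$ yields $\sem{\Ac_\Ed} = \sem{\Ed}$.

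I expect the main obstacle to be the bookkeeping in the inductive step: making the bijection between (computations from $\Fd$ labeled $m$) and (choices of a first transition $\Fd\xrightarrow{a|k}\Fd' $ together with a factorization $m = a m'$ and a computation from $\Fd'$ labeled $m'$) fully precise, and checking that the weights multiply out correctly — in particular that trivial identities applied when forming $d(\Fd)$ (which may merge or cancel monomials, so that $f(d(\Fd))$ or the $\Xd_a$ are ``smaller than suggested'') never drop a term that actually contributes to some $\sem{\Fd}(m)$. This is handled cleanly by never reasoning about individual monomials in isolation but always through the semantic identity $\sem{d(\Fd)} = \sem{\Fd}$ of \cref{prop:xpn:equiv}, pushing all cancellation concerns into that already-established equivalence. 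A secondary point worth stating explicitly is that the induction ranges over $\D(\Ed)$, which is finite and stable by expansion (as in the proof of \cref{thm:size}), so the system is a genuine finite linear system and the argument is not circular.
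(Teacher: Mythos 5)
Your proof is correct and takes essentially the same route as the paper's: an induction on the length of $m$ that splits each computation at its first transition, with the local identity $\sem{\Fd} = \bra{\ec(\Fd)} + \sum_{a} a \cdot \sem{d(\Fd)(a)}$ supplied by \cref{prop:xpn:equiv} together with \cref{lem:poly:ops} and \cref{lem:xpn:semantics}. The paper phrases the induction hypothesis directly as $\sem{\Ac_{\Fd}}(m_a)=\sem{\Fd}(m_a)$ over the derived terms rather than as solving a finite linear system, but the content is the same.
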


\noindent
\begin{minipage}[c]{.45\linewidth}
  \begin{example}
    Let $\Ac_k$ be the derived-term automaton of the $k$-tape expression
    $a_1^* \tuple \cdots \tuple a_k^*$.  The states of $\Ac_k$ are all the
    possible expressions where the tape $i$ features $\und$ or $a_i^*$,
    except $\und \tuple \cdots \tuple \und$.  Therefore
    $\length{\Ac_k} = 2^k -1$, and
    $\prod_{i \in [k]}(\width{\Ed}_i+1) = 2^k$.

    \medskip

    $\Ac_3$, the derived-term automaton of $a^* \tuple b^* \tuple c^*$, is
    depicted on the right.
  \end{example}
\end{minipage}
\hfill
\begin{minipage}[c]{.45\linewidth}
  \raisebox{-\height}{\includegraphics[width=1\linewidth]{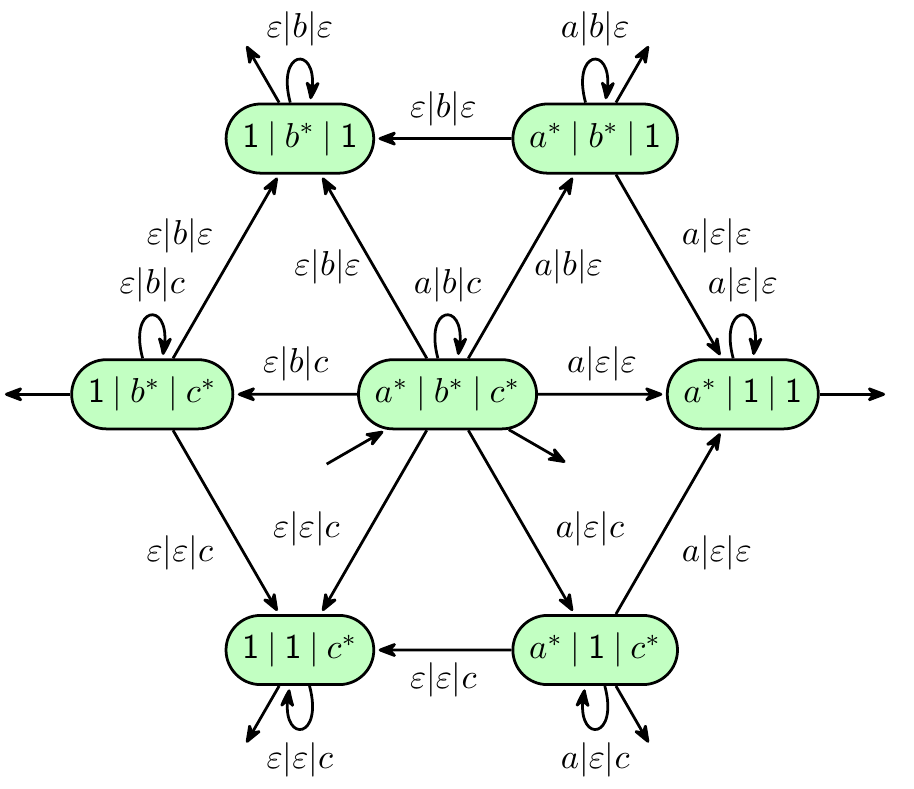}}
\end{minipage}

\vspace{-4mm}

\begin{proof}[\cref{thm:expaton}]
  We will prove $\sem{\Ac_\Ed}(m) = \sem{\Ed}(m)$ by induction on $m\in M$.
  If $m = \eword$, then
  $\sem{\Ac_\Ed}(m) = \Ed_{\eword} = d(\Ed)(\eword) = \sem{d(\Ed)}(\eword) =
  \sem{\Ed}(\eword)$.

  If $m$ is not $\eword$, then it can be generated in a (finite) number of
  ways: let
  $F(\Ed, m) \coloneqq \{(a, m_a) \in f(d(\Ed)) \times M \mid m = am_a\}$.
  $F(\Ed, m)$ is a function: for a given $a$, there is at most one $m_a$
  such that $(a, m_a) \in F(\Ed, m)$.  \cref{fig:expaton} is helpful.
  \abovedisplayskip=\abovedisplayshortskip
  \begin{align*}
    \sem{\Ac_\Ed}(m)
    &= \sum_{(a, m_a)\in F(\Ed, m)} \sum_{i \in[n_a]}\bra*{\di{k}{a}{i}}\sem*{\Ac_{\di{\Ed}{a}{i}}}(m_a)
    & \text{by definition of $\Ac_\Ed$} \\
    &= \sum_{(a, m_a)\in F(\Ed, m)} \sum_{i \in[n_a]}\bra*{\di{k}{a}{i}}\sem*{\di{\Ed}{a}{i}}(m_a)
    & \text{by induction hypothesis}\\
    &= \sum_{(a, m_a)\in F(\Ed, m)} \semBig{\sum_{i \in[n_a]}\bra*{\di{k}{a}{i}}\di{\Ed}{a}{i}}(m_a)
    & \text{by \cref{lem:poly:ops}}\\
    &= \sum_{(a, m_a)\in F(\Ed, m)} \sem{d(\Ed)(a)}(m_a) = \mathrlap{\sum_{(a, m_a)\in F(\Ed, m)} \sem{a\odot d(\Ed)(a)}(am_a)}\\
    &= \sum_{a\in f(d(\Ed))} \sem{a\odot d(\Ed)(a)}(m)
    & \text{$F(\Ed, m)$ is a function}\\
    &= \semBig{\sum_{a\in f(d(\Ed))} a\odot d(\Ed)(a)}(m)
    & \text{by \cref{lem:xpn:semantics}}\\
    &= \sem{\ec(\Ed)}(m)
    & \text{by definition} \\
    &= \sem{d(\Ed)}(m) & \text{since $m \ne \eword$} \\
    &= \sem{\Ed}(m) & \text{by \cref{prop:xpn:equiv}} && \text{\qed}
  \end{align*}
\end{proof}

\begin{example}
  \label{ex:e2}
  Let $\Ed_2 \coloneqq \EdTwo$, where $\Ed^+ \coloneqq\Ed\Ed^*$.  Its
  expansion is
  \abovedisplayskip=\abovedisplayshortskip
  \begin{align*}
    d(\Ed_2)
    & = \bra{1}
      \oplus a|x \odot
      \left[(a^* \tuple \und)
      ( a^+ \tuple x + b^+ \tuple y )^*\right]
      \oplus b|y \odot
      \left[(b^* \tuple \und)
      ( a^+ \tuple x + b^+ \tuple y )^*\right]\\
    &= \bra{1}
      \oplus a|x \odot \left[(a^* \tuple \und) \Ed_2\right]
      \oplus b|y \odot \left[(b^* \tuple \und) \Ed_2\right]
  \end{align*}
  Its derived-term automaton is:

  \vspace{-13mm}\hspace{2cm}
  \centerline{\includegraphics[scale=.8]{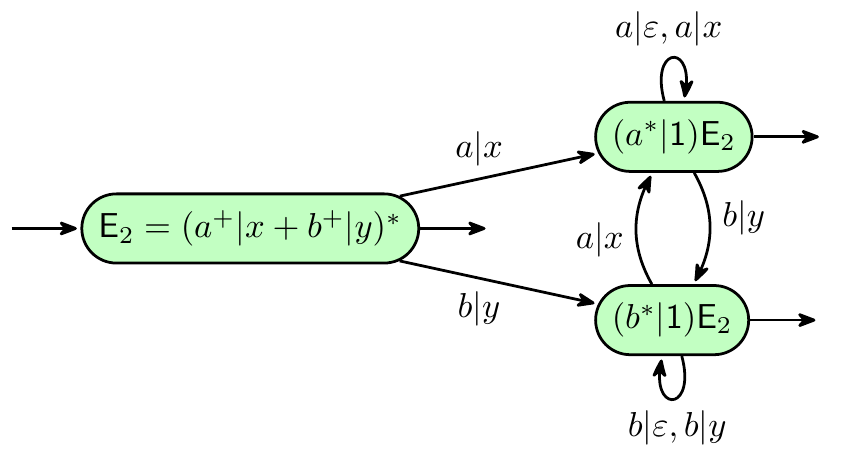}}
\end{example}


\vspace{-8ex}
\section{Related Work}
\label{sec:related}

Multitape rational expressions have been considered early
\citep{makarevskii.1969.csa}, but ``an n-way regular expression is simply a
regular expression whose terms are n-tuples of alphabetic symbols or
$\eword$'' \citep{kaplan.1994.cl}.  However, \citet{kaplan.1994.cl} do
consider the full generality of the semantics of operations on rational
languages and rational relations, including $\times$, the Cartesian product
of languages, and even use rational expressions more general than their
definition.  They do not, however, provide an explicit automaton
construction algorithm, apparently relying on the simple inductive
construction (using the Cartesian product between automata).  Our $\tuple$
operator on series was defined as the \dfn{tensor product}, denoted
$\otimes$, by \citet[Sec.~III.3.2.5]{sakarovitch.09.eat}, but without
equivalent for expressions.

\longskip

\citet{brzozowski.64.jacm} introduced the idea of derivatives of expressions
as a means to construct an equivalent automaton.  The method applies to
extended (unweighted) rational expressions, and constructs a deterministic
automaton.  \citet{antimirov.1996.tcs} modified the computation to rely on
parts of the derivatives (``partial derivatives''), which results in
nondeterministic automata.
\longpar
\citet{lombardy.2005.tcs} extended this approach to support weighted
expressions; independently, and with completely different foundations,
\citet{rutten.2003.tcs} proposed a similar construction.
\citet{caron.2011.lata.2} introduced support for (unweighted) extended
expressions.
\citet{demaille.16.arxiv} provides support for weighted extended
expressions; expansions, originally mentioned by
\citet{brzozowski.64.jacm}, are placed at the center of the construct,
replacing derivatives, to gain independence with respect to the size of
the alphabet, and efficiency.  However, the proofs still relied on
derivatives, contrary to the present work.

\longskip

\ifthenelse{\boolean{long}}%
{%
  \citet{makarevskii.1969.csa} define derivatives, but (i) in the case of
  expressions over tuples of letters, and (ii) only when in so-called
  ``standard form'', for which he notes ``no method of constructing [an]
  n-expression in standard form for a regular n-expression is known.''
  However, from \cref{eq:epn:tuple} one can deduce a definition of
  derivatives for the tuple operator (see \cref{sec:derivatives} for more
  details):
  \begin{align*}
    c(\Ed \tuple \Fd)     & \coloneqq c(\Ed) \cdot c(\Fd),
    & \derivative{a|b}{(\Ed \tuple \Fd)},
    & \coloneqq \da{\Ed}\tuple\db{\Fd},\\
    && \derivative{a|\eword}{(\Ed \tuple \Fd)},
    & \coloneqq \lmul{c(\Fd)}{(\da{\Ed} \tuple \und)}, \\
    &&\derivative{\eword|b}{(\Ed \tuple \Fd)},
    & \coloneqq \lmul{c(\Ed)}{(\und \tuple \db{\Fd})}.
  \end{align*}%
  From an implementation point of view, that would lead to repeated
  computations of $\derivative{a}{\Ed}$ and of $\derivative{b}{\Fd}$, unless
  one would cache them, but that's what expansions do.

  Note that these derivatives are no longer equivalent to the left quotient
  of the corresponding language.  Consider
  $\Fd \coloneqq (a^*\tuple \und)\EdTwo$: the language it denotes includes
  $ab|y$, yet $\derivative{a|y}{\Fd} = \bra{\zeK}$.  Albeit surprising, this
  result is nevertheless sufficient as can be observed in the derived-term
  automaton in \cref{ex:e2}: while the state $(a^*\tuple \und)\EdTwo$ does
  accept words starting with $a$ on the first tape, and $y$ on the second,
  an outgoing transition on $a|y$ would result in a more complex automaton.
}%
{%
  Based on \cref{eq:epn:tuple} one could attempt to define a
  derivative-based version, with
  \begin{math}
    \derivative{a|b}{(\Ed \tuple \Fd)} \coloneqq
    \derivative{a}{\Ed} \tuple \derivative{b}{\Fd}
  \end{math}, however this is troublesome on several regards.  First, it would
  also require $\derivative{a|\eword}{}$ and $\derivative{\eword|b}{}$, whose
  semantics is dubious.  Second, from an implementation point of view, that
  would lead to repeated computations of $\derivative{a}{\Ed}$ and of
  $\derivative{b}{\Fd}$, unless one would cache them, but that's exactly what
  expansions do.  And finally observe that in the derived-term automaton in
  \cref{ex:e2}, the state $(a^*\tuple \und)\EdTwo$ accepts words starting with
  $a$ on the first tape, and $y$ on the second, yet an outgoing transition on
  $a|y$ would result in a more complex automaton.
  \longpar
  Alternative definitions of derivatives may
  exist\footnote{\citet{makarevskii.1969.csa} define derivatives, but (i) in
    the case of expressions over tuples of letters, and (ii) only when in
    so-called ``standard form'', for which he notes ``no method of
    constructing [an] n-expression in standard form for a regular n-expression
    is known.''}, but anyway they would no longer be equivalent to taking the
  left-quotient of the corresponding language: $a|y$ \emph{is} a viable prefix
  from this state.%
}

\longskip

Different constructions of the derived-term automaton have been discovered
\citep{allauzen.2006.mfcs.2, champarnaud.2007.dlt.2}.  They do not rely on
derivatives at all.  It is an open question whether these approaches can be
adapted to support a tuple operator.

\section{Conclusion}

Our work is in the continuation of derivative-based computations of the
derived-term automaton \citep{brzozowski.64.jacm, antimirov.1996.tcs,
  lombardy.2005.tcs, caron.2011.lata.2}.  However, we replaced the
derivatives by expansions, which lifted the requirement for the monoid of
labels to be free.

In order to support $k$-tape (weighted) rational expressions, we introduced
a tupling operator, which is more compact and readable than simple
expressions on $k$-tape letters.  We demonstrated how to build the
derived-term automaton for any such expressions.

\vcsn{}\cref{foot:url} implements the techniques exposed in this paper.  Our
future work aims at other operators, and studying more closely the
complexity of the algorithm.
\begin{longenv}
  The usual state-elimination method to compute an expression from an
  automaton works perfectly, however we are looking for means to reduce the
  expression size.
\end{longenv}

\paragraph*{Acknowledgments} The author thanks the anonymous reviewers for
their constructive comments, and A.~Duret-Lutz, S.~Lombardy, L.~Saiu and
J.~Sakarovitch for their feedback during this work.

\bibliographystyle{myabbrvnat}
\bibliography{%
  article,%
  share/bib/acronyms,%
  share/bib/lrde,%
  share/bib/comp.compilers.automata%
}

\begin{longenv}
\appendix
\section{Appendix}

\subsection{Proof of \cref{lem:poly:ops}}
\label{proof:lem:poly:ops}
\begin{proof}[\cref{lem:poly:ops}]
  The first three equations are straightforward to prove.
  \begin{align*}
   \sem{\Pd \tuple \Qd}
    &= \semBig{\bigoplus_{(i,j) \in [n]\times [m]} \Lmul{k_i\cdot h_j}{\Ed_i \tuple \Fd_j}}
    \\
    &= \sum_{(i,j) \in [n]\times[m]} \lmul{k_i\cdot h_j}{\sem{\Ed_i \tuple \Fd_j}}
    \\
    &= \sum_{(i,j) \in [n]\times[m]} \lmul{k_i\cdot h_j}{\sem{\Ed_i} \tuple \sem{\Fd_j}}
    \\
    &= \parenBig{\sum_{i \in [n]} \lmul{k_i}{\sem{\Ed_i}}}
      \tuple \parenBig{\sum_{j \in [m]} \lmul{h_j}{\sem{\Fd_j}}}
    \\
    &= \semBig{\bigoplus_{i \in [n]} \Lmul{k_i}{\Ed_i}}
      \tuple \semBig{\bigoplus_{j \in [m]} \Lmul{h_j}{\Fd_j}}
    \\
    &= \sem{\Pd} \tuple \sem{\Qd} &\hfill \qed
  \end{align*}
\end{proof}

\subsection{Derived-Term Algorithm}
\label{sec:algo}

\begin{algorithm}[H]
  \SetKwInOut{Input}{Input}
  \SetKwInOut{Output}{Output}

  \Input{$\Ed$, a rational expression}
  \Output{$\bra{E, I, T}$ an automaton (simplified notation)}
  \BlankLine

  $I(\Ed)$ := $\unK$ \tcp*{Unique initial state}
  $Q$ := Queue($\Ed$) \tcp*{A work list (queue) loaded with $\Ed$}
  \While{$Q$ is not empty}
  {
    $\Ed$ := pop($Q$)        \tcp*{A new state/expression to complete}
    $\Xd$ := $d(\Ed)$        \tcp*{The expansion of $\Ed$}
    $T(\Ed)$ := $\Xd(\eword)$  \tcp*{Final weight: the constant term}
    \ForEach(\tcp*[f]{For each first/polynomial in $\Xd$}){$a \odot[\Pd_a] \in \Xd$}
    {
      \ForEach(\tcp*[f]{For each monomial of $\Pd_a = \Xd(a)$}){$\Lmul{k}{\Fd} \in \Pd_a$}
      {
        $E(\Ed, a, \Fd)$ := $k$ \tcp*{New transition}
        \If{$\Fd \not\in Q$}
        {
          push($Q$, $\Fd$) \tcp*{$\Fd$ is a new state, to complete later}
        }
      }
    }
  }
\end{algorithm}

\subsection{Derived Terms}
\label{sec:dt}

We will prove that the states of $\Ac_\Ed$ are actually members of
$\TD(\Ed)$ (and $\Ed$ itself), a \emph{finite} set of expressions, called the
\dfn{derived terms} of $\Ed$.  $\TD(\Ed)$ admits a simple inductive
definition.

\begin{definition}[Derived Terms]
  The \dfn{true derived terms} of an expression $\Ed$ is $\TD(\Ed)$, the set
  of expressions defined inductively below:
  \begin{align*}
    \TD(\zed) &\coloneqq \emptyset \\
    \TD(\und) &\coloneqq \emptyset \\
    \TD(a)    &\coloneqq \{\und\} \ee \forall a \in A\\
    \TD(\Ed + \Fd) &\coloneqq \TD(\Ed) \cup \TD(\Fd) \\
    \TD(\lmul{k}{\Ed}) &\coloneqq \TD(\Ed) \ee \forall k \in \K \\
    \TD(\rmul{\Ed}{k}) &\coloneqq \{\rmul{\Ed_i}{k} \mid \Ed_i \in \TD(\Ed)\}  \ee \forall k \in \K \\
    \TD(\Ed\cdot \Fd) &\coloneqq  \{\Ed_i\cdot\Fd \mid \Ed_i \in \TD(\Ed)\} \cup \TD(\Fd) \\
    \TD(\Ed^*) &\coloneqq  \{\Ed_i\cdot\Ed^* \mid \Ed_i \in \TD(\Ed)\} \\
    \TD(\Ed\tuple \Fd)
    &  \coloneqq (\TD(\Ed) \tuple \TD(\Fd)) \cup (\{\und\}
      \tuple \TD(\Fd)) \cup (\TD(\Ed) \tuple \{\und\})
  \end{align*}

  The \dfn{derived terms} of an expression $\Ed$ is
  $\D(\Ed) \coloneqq \TD(\Ed) \cup \{\Ed\}$.
\end{definition}

\begin{lemma}[Number of Derived Terms]
  For any $k$-tape expression $\Ed$,
  \begin{align*}
   \length{\TD(\Ed)} \le \prod_{i \in [k]}(\width{\Ed}_i + 1) \quad.
  \end{align*}
\end{lemma}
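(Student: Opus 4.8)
The plan is to prove the bound $\length{\TD(\Ed)} \le \prod_{i\in[k]}(\width{\Ed}_i+1)$ by structural induction on $\Ed$, following the inductive definition of $\TD$ case by case. The base cases are immediate: for $\zed$ and $\und$ the set $\TD$ is empty so the count is $0 \le \prod_i(0+1) = 1$; for a single label $a$ on tape $i_0$, $\TD(a) = \{\und\}$ has size $1$, while $\width{a}_{i_0} = 1$ and $\width{a}_i = 0$ for $i\ne i_0$, so the product is $2 \cdot 1 \cdots 1 = 2 \ge 1$. The unary cases $\lmul{k}{\Ed}$ and $\rmul{\Ed}{k}$ preserve both $\TD$ (up to an injective relabelling by $k$ on the right) and all the widths, so the inequality transfers verbatim.

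For the binary cases I would exploit the simple numeric fact that for nonnegative integers, $x + y \le (x+1)(y+1)$ whenever $x$ or $y$ is at most\ldots actually the cleaner inequalities to isolate first are: $\width{\Ed+\Fd}_i = \width{\Ed}_i + \width{\Fd}_i$, $\width{\Ed\cdot\Fd}_i = \width{\Ed}_i+\width{\Fd}_i$, $\width{\Ed^*}_i = \width{\Ed}_i$, and $\width{\Ed\tuple\Fd}_i = \width{\Ed}_i$ on the tapes of $\Ed$ and $= \width{\Fd}_i$ on the tapes of $\Fd$ (tapes being disjoint between the two factors of a tuple). For $\Ed+\Fd$: $\length{\TD(\Ed+\Fd)} \le \length{\TD(\Ed)} + \length{\TD(\Fd)} \le \prod_i(\width{\Ed}_i+1) + \prod_i(\width{\Fd}_i+1) \le \prod_i(\width{\Ed}_i+\width{\Fd}_i+1)$, where the last step uses $a+b \le ab$ for $a,b\ge 2$ together with a careful treatment of the case where some factor equals $1$ (i.e.\ a tape on which neither side has a label) — expanding $\prod_i(\width{\Ed}_i+\width{\Fd}_i+1)$ and comparing monomials makes this transparent. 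The case $\Ed\cdot\Fd$ is identical since $\length{\TD(\Ed\cdot\Fd)} \le \length{\TD(\Ed)} + \length{\TD(\Fd)}$ again (the map $\Ed_i \mapsto \Ed_i\cdot\Fd$ is injective). For $\Ed^*$, $\length{\TD(\Ed^*)} \le \length{\TD(\Ed)} \le \prod_i(\width{\Ed}_i+1) = \prod_i(\width{\Ed^*}_i+1)$, trivially.

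The tuple case is where the structure really bites and is the step I expect to be the main obstacle. Here $\TD(\Ed\tuple\Fd) = (\TD(\Ed)\tuple\TD(\Fd)) \cup (\{\und\}\tuple\TD(\Fd)) \cup (\TD(\Ed)\tuple\{\und\})$, and the three pieces are pairwise disjoint (an element of the first piece has a non-$\und$, non-atomic component on both sides in general; more carefully one checks the second coordinate distinguishes the second piece and the first coordinate the third), so $\length{\TD(\Ed\tuple\Fd)} \le \length{\TD(\Ed)}\cdot\length{\TD(\Fd)} + \length{\TD(\Fd)} + \length{\TD(\Ed)} = (\length{\TD(\Ed)}+1)(\length{\TD(\Fd)}+1) - 1$. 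By the induction hypothesis this is at most $\bigl(\prod_{i}(\width{\Ed}_i+1)+1\bigr)\bigl(\prod_{j}(\width{\Fd}_j+1)+1\bigr)-1$, and I need this to be $\le \prod_i(\width{\Ed\tuple\Fd}_i+1)$. Writing $P$ for the product over $\Ed$'s tapes and $R$ for the product over $\Fd$'s tapes, the right-hand side is exactly $PR$ (tapes disjoint, widths inherited), so the required inequality is $(P+1)(R+1)-1 \le PR$, i.e.\ $P + R \le 0$ — which is false. So the naive disjoint-union counting is too lossy, and the real work is to show the three parts overlap, or rather that the bound $\length{\TD(\Ed)} \le \prod_i(\width{\Ed}_i+1)$ is not tight enough to feed the induction in this form. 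The fix: strengthen the induction hypothesis. One notes $\length{\{\und\}\tuple\TD(\Fd)} = \length{\TD(\Fd)}$ and that these elements, together with $\TD(\Ed)\tuple\TD(\Fd)$, can be organised as $(\{\und\}\cup\TD(\Ed))\tuple\TD(\Fd)$ when $\und \notin \TD(\Ed)$ — but $\und$ may be in $\TD(\Ed)$. The clean route is to prove the companion bound $\length{\D(\Ed)} = \length{\TD(\Ed)\cup\{\Ed\}} \le \prod_i(\width{\Ed}_i+1)$ — i.e.\ that even after adding $\Ed$ itself the count stays within the product — and to use $\D$ rather than $\TD$ on both sides of the tuple recursion, since $\D(\Ed)\tuple\D(\Fd) \supseteq \TD(\Ed\tuple\Fd)\cup\{\Ed\tuple\Fd\}$ (every piece of $\TD(\Ed\tuple\Fd)$ uses components from $\D(\Ed)$ resp.\ $\D(\Fd)$, because $\und\in\D(\cdot)$ exactly when the component is a single letter, which is precisely the $\{\und\}\tuple\ldots$ piece). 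Then $\length{\D(\Ed\tuple\Fd)} \le \length{\D(\Ed)\tuple\D(\Fd)} \le \length{\D(\Ed)}\cdot\length{\D(\Fd)} \le PR = \prod_i(\width{\Ed\tuple\Fd}_i+1)$, and all the other cases must be re-examined for this strengthened statement (they go through: e.g.\ $\D(\Ed+\Fd) \subseteq \D(\Ed)\cup\D(\Fd)\cup\{\Ed+\Fd\}$ needs $a+b+1 \le ab$ for $a,b\ge 2$ with the same monomial-expansion argument for the $=1$ tapes). So the true plan is: state and prove the stronger claim about $\length{\D(\Ed)}$, from which the lemma about $\length{\TD(\Ed)}$ follows a fortiori, and the tuple case is handled by the clean multiplicativity $\length{\D(\Ed\tuple\Fd)} \le \length{\D(\Ed)}\length{\D(\Fd)}$.
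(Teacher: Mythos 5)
Your diagnosis of where the naive induction breaks is exactly right: with the hypothesis $\length{\TD(\Ed)}\le\prod_i(\width{\Ed}_i+1)$, the tuple case asks for $(P+1)(R+1)-1\le PR$ (with $P=\prod_i(\width{\Ed}_i+1)$ and $R=\prod_j(\width{\Fd}_j+1)$), which is false, so the induction hypothesis must be strengthened. Unfortunately the strengthening you choose does not work. The containment $\TD(\Ed\tuple\Fd)\cup\{\Ed\tuple\Fd\}\subseteq\D(\Ed)\tuple\D(\Fd)$ is false: the pieces $\{\und\}\tuple\TD(\Fd)$ and $\TD(\Ed)\tuple\{\und\}$ are present in $\TD(\Ed\tuple\Fd)$ for \emph{every} $\Ed$ and $\Fd$, not only when a component is a single letter, while $\und$ need not belong to $\D(\Ed)$ or $\D(\Fd)$. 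Take $\Ed=a^*$ and $\Fd=b^*$: then $\TD(a^*)=\{\und\cdot a^*\}=\{a^*\}$ (by the trivial identities), so $\D(a^*)=\{a^*\}$ and $\D(a^*)\tuple\D(b^*)=\{a^*\tuple b^*\}$ has one element, whereas $\D(a^*\tuple b^*)=\{a^*\tuple b^*,\ \und\tuple b^*,\ a^*\tuple\und\}$ has three. The claimed multiplicativity $\length{\D(\Ed\tuple\Fd)}\le\length{\D(\Ed)}\cdot\length{\D(\Fd)}$ thus fails ($3\not\le 1$); this is precisely the paper's own example $a_1^*\tuple\cdots\tuple a_k^*$, whose derived-term automaton has $2^k-1$ states although each factor contributes a single derived term.

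The quantity that behaves multiplicatively under $\tuple$ is $\length{\TD(\cdot)}+1$, not $\length{\D(\cdot)}$; the two differ whenever $\Ed\in\TD(\Ed)$, as for $a^*$. The correct strengthened invariant is therefore $\length{\TD(\Ed)}\le\prod_{i}(\width{\Ed}_i+1)-1$, and it closes under every case. The base cases give $0$ (for $\zed$, $\und$) or $1$ (for a non-empty label $a$, where the right-hand side is at least $2-1=1$); the exterior products and the star are immediate; sum and concatenation reduce to the numeric inequality $\prod_i(x_i+1)+\prod_i(y_i+1)-1\le\prod_i(x_i+y_i+1)$, which holds because every monomial of either product occurs in the expansion of the right-hand side and the $-1$ absorbs the doubly counted constant term --- this also repairs the edge case you flag where some factors equal $1$. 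For tuple, counting the three-part union gives $\length{\TD(\Ed\tuple\Fd)}\le\length{\TD(\Ed)}\cdot\length{\TD(\Fd)}+\length{\TD(\Ed)}+\length{\TD(\Fd)}=(\length{\TD(\Ed)}+1)(\length{\TD(\Fd)}+1)-1\le PR-1$, using that the tapes of the two components are disjoint so that $\prod_i(\width{\Ed\tuple\Fd}_i+1)=PR$. This last inequality is essentially the one the paper's (very terse) proof invokes, and the stated lemma follows a fortiori from the strengthened bound.
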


\begin{proof}
  It is simple to check by induction on $\Ed$ that for all cases, except
  tuple, $\TD(\Ed) \le \width{\Ed}$ (which is the classical result for
  single-tape expressions).  In the case of $\tuple$, it is clear that
  $\length{\TD(\Ed\tuple \Fd)} \le (\length{\TD(\Ed)} + 1) \cdot
  (\length{\TD(\Fd)} + 1)$, hence the result.
\end{proof}

\begin{lemma}[True Derived Terms and Single Expansion]
  \label{lem:tdtxpn:1}
  For any expression $\Ed$, $\exprs{d(\Ed)} \subseteq \TD(\Ed)$.
\end{lemma}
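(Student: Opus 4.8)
The plan is to prove $\exprs{d(\Ed)} \subseteq \TD(\Ed)$ by structural induction on $\Ed$, following exactly the case analysis of \cref{def:expa-of-expr}. The base cases are immediate: $d(\zed) = \bra{\zeK}$ and $d(\und) = \bra{\unK}$ have empty term sets, matching $\TD(\zed) = \TD(\und) = \emptyset$, and $d(a) = a \odot [\Lmul{\unK}{\und}]$ has $\exprs{d(a)} = \{\und\} = \TD(a)$. For each inductive case I would first observe what the term set of the right-hand side expansion is in terms of the term sets (and proper parts) of the recursively computed expansions, then compare with the corresponding clause of the definition of $\TD$.

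The routine cases are addition, left and right scalar products: for $d(\Ed+\Fd) = d(\Ed) \oplus d(\Fd)$, the definition of $\oplus$ in \cref{eq:epn:plus:epn} gives $\exprs{d(\Ed+\Fd)} \subseteq \exprs{d(\Ed)} \cup \exprs{d(\Fd)}$, which by the induction hypothesis is contained in $\TD(\Ed) \cup \TD(\Fd) = \TD(\Ed+\Fd)$ (the inclusion may be strict because null polynomials are dropped, but that only helps). Similarly $\lmul{k}{\cdot}$ leaves terms unchanged and $\rmul{\cdot}{k}$ pushes a right multiplication by $k$ into every term, matching the $\TD$ clauses. The multiplication case uses \cref{eq:epn:mul}: $d(\Ed\cdot\Fd) = \ep(\Ed)\cdot\Fd \oplus \lmul{\ec(\Ed)}{d(\Fd)}$, so its terms are among $\{\Ed_i \cdot \Fd \mid \Ed_i \in \exprs{\ep(\Ed)}\} \cup \exprs{d(\Fd)}$; since $\exprs{\ep(\Ed)} \subseteq \exprs{d(\Ed)} \subseteq \TD(\Ed)$ by induction, this is contained in $\{\Ed_i\cdot\Fd \mid \Ed_i \in \TD(\Ed)\} \cup \TD(\Fd) = \TD(\Ed\cdot\Fd)$. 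The star case, \cref{eq:epn:star}, is analogous: $d(\Ed^*) = \bra{\ec(\Ed)^*} \oplus \lmul{\ec(\Ed)^*}{\ep(\Ed)\cdot\Ed^*}$ has terms among $\{\Ed_i\cdot\Ed^* \mid \Ed_i \in \exprs{\ep(\Ed)}\} \subseteq \{\Ed_i\cdot\Ed^* \mid \Ed_i \in \TD(\Ed)\} = \TD(\Ed^*)$.

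The one case deserving care is the tuple, \cref{eq:epn:tuple}: $d(\Ed\tuple\Fd) = d(\Ed) \tuple d(\Fd)$. Unfolding the definition of $\tuple$ on expansions in \cref{eq:epn:tuple:epn}, the polynomials appearing are of three shapes: $\und \tuple \Yd_b$ for $b \in f(d(\Fd))$, $\Xd_a \tuple \und$ for $a \in f(d(\Ed))$, and $\Xd_a \tuple \Yd_b$. Using the definitions of $\tuple$ on polynomials, their term sets are contained respectively in $\{\und\} \tuple \exprs{d(\Fd)}$, $\exprs{d(\Ed)} \tuple \{\und\}$, and $\exprs{d(\Ed)} \tuple \exprs{d(\Fd)}$. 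By the induction hypothesis $\exprs{d(\Ed)} \subseteq \TD(\Ed)$ and $\exprs{d(\Fd)} \subseteq \TD(\Fd)$, so altogether $\exprs{d(\Ed\tuple\Fd)}$ is contained in $(\TD(\Ed)\tuple\TD(\Fd)) \cup (\{\und\}\tuple\TD(\Fd)) \cup (\TD(\Ed)\tuple\{\und\}) = \TD(\Ed\tuple\Fd)$. The only subtlety here is that trivial identities (in particular $(\lmulq{k}{\Ed})\tuple(\lmulq{h}{\Fd}) \Rightarrow \lmulq{kh}{\Ed\tuple\Fd}$) may rewrite a term $\Ed_i \tuple \Fd_j$; one must check these rewrites keep the term inside the stated set — the weight-pulling identity does, since it only moves scalars out of the tuple and $\TD$ is insensitive to outer left scalars. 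This bookkeeping about trivial identities is the main obstacle, but it is the same kind of argument already tacitly used in the single-tape cases, so it is routine rather than deep.
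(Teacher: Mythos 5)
Your proof is correct and is exactly the argument the paper intends: its own proof consists of the single sentence ``Established by a simple verification of \cref{def:expa-of-expr}'', and your case-by-case structural induction is precisely that verification carried out in full (including the sensible caveat that trivial identities only shrink or preserve the term sets).
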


\begin{proof}
  Established by a simple verification of \cref{def:expa-of-expr}.  \qed
\end{proof}

The derived terms of derived terms of $\Ed$ are derived terms of $\Ed$.  In
other words, repeated expansions never ``escape'' the set of derived terms.

\begin{lemma}[True Derived Terms and Repeated Expansions]
  \label{lem:tdtxpn:*}
  Let $\Ed$ be an expression.  For all $\Fd \in \TD(\Ed)$,
  $\exprs{d(\Fd)} \subseteq \TD(\Ed)$.
\end{lemma}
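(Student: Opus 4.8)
The plan is to prove Lemma~\ref{lem:tdtxpn:*} by structural induction on $\Ed$, exploiting Lemma~\ref{lem:tdtxpn:1} as the ``base'' mechanism: for any $\Fd$ we already know $\exprs{d(\Fd)} \subseteq \TD(\Fd)$, so it suffices to show that $\TD(\Fd) \subseteq \TD(\Ed)$ whenever $\Fd \in \TD(\Ed)$. In other words, the real content is the purely combinatorial statement that $\TD$ is \emph{transitive} in the sense that $\Fd \in \TD(\Ed) \Rightarrow \TD(\Fd) \subseteq \TD(\Ed)$; combined with $\TD(\Fd) \cup \exprs{d(\Fd)} = \TD(\Fd)$ (from Lemma~\ref{lem:tdtxpn:1}) this yields the lemma. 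So I would first reduce to proving: \emph{for every expression $\Ed$ and every $\Fd \in \TD(\Ed)$, $\TD(\Fd) \subseteq \TD(\Ed)$.}

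First I would handle the leaves: $\TD(\zed) = \TD(\und) = \emptyset$ make the statement vacuous, and $\TD(a) = \{\und\}$ with $\TD(\und) = \emptyset \subseteq \{\und\}$. Then I would go through the inductive cases one at a time, in each case taking an arbitrary $\Fd \in \TD(\Ed)$, tracing which ``shape'' it must have according to the defining equation of $\TD$, and checking that $\TD(\Fd)$ stays inside $\TD(\Ed)$. For $\Ed_1 + \Ed_2$: $\Fd \in \TD(\Ed_1) \cup \TD(\Ed_2)$, say $\Fd \in \TD(\Ed_1)$; by the induction hypothesis $\TD(\Fd) \subseteq \TD(\Ed_1) \subseteq \TD(\Ed_1 + \Ed_2)$. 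For $\lmul{k}{\Ed_1}$ it is immediate since $\TD$ is unchanged. For $\rmul{\Ed_1}{k}$: $\Fd = \rmul{\Gd}{k}$ with $\Gd \in \TD(\Ed_1)$, and $\TD(\rmul{\Gd}{k}) = \{\rmul{\Hd}{k} \mid \Hd \in \TD(\Gd)\} \subseteq \{\rmul{\Hd}{k} \mid \Hd \in \TD(\Ed_1)\} = \TD(\rmul{\Ed_1}{k})$ using the induction hypothesis on $\Ed_1$ applied to $\Gd$. The concatenation and star cases are the classical Antimirov/Angrand arguments: for $\Ed_1 \cdot \Ed_2$, either $\Fd = \Gd \cdot \Ed_2$ with $\Gd \in \TD(\Ed_1)$, in which case $\TD(\Gd \cdot \Ed_2) = \{\Hd \cdot \Ed_2 \mid \Hd \in \TD(\Gd)\} \cup \TD(\Ed_2) \subseteq \{\Hd \cdot \Ed_2 \mid \Hd \in \TD(\Ed_1)\} \cup \TD(\Ed_2) = \TD(\Ed_1 \cdot \Ed_2)$, or $\Fd \in \TD(\Ed_2)$ and we apply the induction hypothesis directly; the star case $\Ed_1^*$ is analogous with $\Fd = \Gd \cdot \Ed_1^*$, $\Gd \in \TD(\Ed_1)$, noting $\TD(\Gd \cdot \Ed_1^*) = \{\Hd \cdot \Ed_1^* \mid \Hd \in \TD(\Gd)\} \cup \TD(\Ed_1^*)$ and that $\TD(\Gd) \subseteq \TD(\Ed_1)$ by induction.

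The main obstacle is the tuple case $\Ed_1 \tuple \Ed_2$, where $\TD(\Ed_1 \tuple \Ed_2) = (\TD(\Ed_1) \tuple \TD(\Ed_2)) \cup (\{\und\} \tuple \TD(\Ed_2)) \cup (\TD(\Ed_1) \tuple \{\und\})$. Here $\Fd$ has the form $\Gd_1 \tuple \Gd_2$ where $\Gd_1 \in \TD(\Ed_1) \cup \{\und\}$ and $\Gd_2 \in \TD(\Ed_2) \cup \{\und\}$, not both $\und$. I need to compute $\TD(\Gd_1 \tuple \Gd_2)$, which by definition equals $(\TD(\Gd_1) \tuple \TD(\Gd_2)) \cup (\{\und\} \tuple \TD(\Gd_2)) \cup (\TD(\Gd_1) \tuple \{\und\})$, and show every element lands in $\TD(\Ed_1 \tuple \Ed_2)$. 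The key sublemma I would establish first is: for any $\Gd_1 \in \TD(\Ed_1) \cup \{\und\}$, one has $\TD(\Gd_1) \cup \{\und\} \subseteq \TD(\Ed_1) \cup \{\und\}$ --- this holds because if $\Gd_1 = \und$ then $\TD(\und) = \emptyset$, and if $\Gd_1 \in \TD(\Ed_1)$ then $\TD(\Gd_1) \subseteq \TD(\Ed_1)$ by the induction hypothesis. Symmetrically for $\Gd_2$. Then every component $\Hd_1 \tuple \Hd_2$ of $\TD(\Gd_1 \tuple \Gd_2)$ has $\Hd_1 \in \TD(\Gd_1) \cup \{\und\} \subseteq \TD(\Ed_1) \cup \{\und\}$ and $\Hd_2 \in \TD(\Gd_2) \cup \{\und\} \subseteq \TD(\Ed_2) \cup \{\und\}$, with not both being $\und$ (since at least one of $\Hd_1, \Hd_2$ comes from a genuine $\TD(\cdot)$ set in each of the three union pieces), hence $\Hd_1 \tuple \Hd_2 \in \TD(\Ed_1 \tuple \Ed_2)$. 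The one delicate point to be careful about is verifying the ``not both $\und$'' side condition survives, i.e.\ that $\TD(\Ed_1 \tuple \Ed_2)$ as defined really does contain all pairs with at least one non-$\und$ component drawn from the respective $\TD \cup \{\und\}$ sets --- but this is exactly the union of the three pieces in the definition, so it checks out. Once the tuple case is done, assembling all cases completes the induction, and the lemma follows by combining with Lemma~\ref{lem:tdtxpn:1}.
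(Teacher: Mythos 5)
Your proposal is correct, but it factors the argument differently from the paper. The paper's proof is also an induction on $\Ed$, but in each case it takes $\Fd\in\TD(\Ed)$, \emph{computes the expansion} $d(\Fd)$ via \cref{def:expa-of-expr} (e.g., $d(\Gd_i\cdot\Hd) = \ep(\Gd_i)\cdot\Hd\oplus\lmul{\ec(\Gd_i)}{d(\Hd)}$ in the concatenation case) and checks that each resulting term lands in $\TD(\Ed)$, invoking \cref{lem:tdtxpn:1} only pointwise (e.g., for $\exprs{d(\Gd^*)}$ in the star case). You instead isolate the purely combinatorial statement that $\TD$ is transitive ($\Fd\in\TD(\Ed)\Rightarrow\TD(\Fd)\subseteq\TD(\Ed)$), prove it by induction without ever touching expansions, and conclude by a single application of \cref{lem:tdtxpn:1} \emph{to $\Fd$ itself}: $\exprs{d(\Fd)}\subseteq\TD(\Fd)\subseteq\TD(\Ed)$. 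This is a clean separation of concerns: all expansion-specific reasoning is quarantined in \cref{lem:tdtxpn:1}, and your induction manipulates only the defining equations of $\TD$. Your route also treats the tuple case more completely than the paper's written proof, which considers only $\Fd\in\TD(\Gd)\tuple\TD(\Hd)$ and silently omits the $\{\und\}\tuple\TD(\Hd)$ and $\TD(\Gd)\tuple\{\und\}$ components.

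One small inaccuracy remains in your tuple case: you assert that no element of $\TD(\Gd_1\tuple\Gd_2)$ has both components equal to $\und$, ``since at least one comes from a genuine $\TD(\cdot)$ set.'' That inference fails, because $\und$ can itself belong to a genuine $\TD(\cdot)$ set: take $\Gd_1=a$ and $\Gd_2=b$, so that $\TD(\Gd_1)=\TD(\Gd_2)=\{\und\}$ and $\und\tuple\und\in\TD(\Gd_1)\tuple\TD(\Gd_2)$. The conclusion nevertheless survives, since in that situation $\und\in\TD(\Gd_1)\subseteq\TD(\Ed_1)$ and $\und\in\TD(\Gd_2)\subseteq\TD(\Ed_2)$, hence $\und\tuple\und\in\TD(\Ed_1)\tuple\TD(\Ed_2)\subseteq\TD(\Ed_1\tuple\Ed_2)$. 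The cleaner repair is to drop the ``not both $\und$'' bookkeeping altogether and map each of the three pieces of $\TD(\Gd_1\tuple\Gd_2)$ into the corresponding piece of $\TD(\Ed_1\tuple\Ed_2)$: $\TD(\Gd_1)\tuple\TD(\Gd_2)\subseteq\TD(\Ed_1)\tuple\TD(\Ed_2)$, $\{\und\}\tuple\TD(\Gd_2)\subseteq\{\und\}\tuple\TD(\Ed_2)$, and symmetrically, using your sublemma that $\TD(\Gd_i)\subseteq\TD(\Ed_i)$ (vacuously when $\Gd_i=\und$). With that adjustment the argument is complete.
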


\begin{proof}
  This will be proved by induction over $\Ed$.
  \begin{description}
  \item[Case $\Ed = \zed$ or $\Ed = \und$.]  Impossible, as then $\TD(\Ed) =
    \emptyset$.

  \item[Case $\Ed = a$.] Then $\TD(\Ed) = \{\und\}$, hence $\Fd = \und$ and
    therefore $d(\Fd) = d(\und) = \bra{\zeK}$, so
    $\exprs{d(\Fd)} = \emptyset \subseteq \TD(\Ed)$.

  \item[Case $\Ed = \Gd + \Hd$.]  Then $\TD(\Ed) = \TD(\Gd) \cup \TD(\Hd)$.
    Suppose, without loss of generality, that $\Fd \in \TD(\Gd)$.  Then, by
    induction hypothesis,
    $\exprs{d(\Fd)} \subseteq \TD(\Gd) \subseteq \TD(\Ed)$.

  \item[Case $\Ed = \lmul{k}{\Gd}$.]  Then if
    $\Fd \in \TD(\lmul{k}{\Gd}) = \TD(\Gd)$, so by induction hypothesis
    $\exprs{d(\Fd)} \subseteq \TD(\Gd) = \TD(\lmul{k}{\Gd}) = \TD(\Ed)$.

  \item[Case $\Ed = \rmul{\Gd}{k}$.]  Then
    $\forall \Fd \in \TD(\rmul{\Gd}{k}) = \{\rmul{\Gd_i}{k} \mid \Gd_i \in
    \TD(\Gd)\}$,
    there exists an $i$ such that $\Fd = \rmul{\Gd_i}{k}$.  Then
    $d(\Fd) = d(\rmul{\Gd_i}{k}) = \rmul{d(\Gd_i)}{k}$ hence
    $\exprs{d(\Fd)} = \exprs{\rmul{d(\Gd_i)}{k}}$.

    Since $\Gd_i \in \TD(\Gd)$, by induction hypothesis
    $\exprs{d(\Gd_i)} \subseteq \TD(\Gd)$, so by definition of the right
    exterior product of expansions (and polynomials),
    $\exprs{\rmul{d(\Gd_i)}{k}} \subseteq \TD(\rmul{\Gd}{k}) = \TD(\Ed)$.

    Hence $\exprs{d(\Fd)} \subseteq \TD(\Ed)$.

  \item[Case $\Ed = \Gd \cdot \Hd$.]  Then
    $\TD(\Ed) = \{\Gd_i\cdot\Hd \mid \Gd_i \in \TD(\Gd)\} \cup \TD(\Hd)$.
    \begin{itemize}
    \item If $\Fd = \Gd_i\cdot\Hd$ with $\Gd_i \in \TD(\Gd)$, then
      $d(\Fd) = d(\Gd_i\cdot\Hd) = \ep(\Gd_i)\cdot \Hd \oplus
      \lmul{\ec(\Gd_i)}{d(\Hd)}$.

      Since $\Gd_i \in \TD(\Gd)$ by induction hypothesis
      $\exprs{\ep(\Gd_i)} = \exprs{d(\Gd_i)} \subseteq \TD(\Gd)$.  By
      definition of the product of an expansion by an expression,
      $\exprs{\ep(\Gd_i)\cdot \Hd} \subseteq \{\Gd_j \cdot \Hd \mid \Gd_j
      \in \TD(\Gd)\} \subseteq \TD(\Gd \cdot \Hd) = \TD(\Ed)$.

    \item If $\Fd \in \TD(\Hd)$, then by induction hypothesis $\exprs{d(\Fd)}
      \subseteq \TD(\Hd) \subseteq \TD(\Ed)$.
    \end{itemize}

  \item[Case $\Ed = \Gd^*$.]  If
    $\Fd \in \TD(\Ed) = \{\Gd_i\cdot\Gd^* \mid \Gd_i \in \TD(\Gd)\}$, i.e., if
    $\Fd = \Gd_i\cdot\Gd^*$ with $\Gd_i \in \TD(\Gd)$, then
    $d(\Fd) = d(\Gd_i\cdot\Gd^*) = \ep(\Gd_i)\cdot \Gd^* \oplus
    \lmul{\ec(\Gd_i)}{d(\Gd^*)}$,
    so
    $\exprs{d(\Fd)} \subseteq \exprs{\ep(\Gd_i)\cdot \Gd^*} \cup
    \exprs{d(\Gd^*)}$.\footnote{Given
      two expansions $\Xd_1, \Xd_2$,
      $\exprs{\Xd_1 \oplus \Xd_2} \subseteq \exprs{\Xd_1} \cup
      \exprs{\Xd_2}$,
      but they may be different; consider for instance
      $\Xd_1 = a \odot[\Lmul{1}{\und}]$ and
      $\Xd_2 = a \odot[\Lmul{-1}{\und}]$ with $\K = \Z$.}  We will show that
    both are subsets of $\TD(\Ed)$, which will prove the result.

    Since $\Gd_i \in \TD(\Gd)$, by induction hypothesis,
    $\exprs{\ep(\Gd_i)} = \exprs{d(\Gd_i)} \subseteq \TD(\Gd)$, so by
    definition of a product of an expansion by an expression,
    $\exprs{\ep(\Gd_i) \cdot \Gd^*} \subseteq \{\Gd_j \cdot \Gd_j^* \mid
    \Gd_j \in \TD(\Gd)\} = \TD(\Ed)$.

    By \cref{lem:tdtxpn:1} $\exprs{d(\Gd^*)} \subseteq \TD(\Gd^*) = \TD(\Ed)$.

  \item[Case $\Ed = \Gd \tuple \Hd$.]%
    Let $\Fd \in \TD(\Ed) = \TD(\Gd) \tuple \TD(\Hd)$, i.e., let
    $\Fd = \Gd_i\tuple\Hd_j$ with $\Gd_i \in \TD(\Gd), \Hd_j \in \TD(\Hd)$,
    then by induction hypothesis $\exprs{d(\Gd_i)} \subseteq \TD(\Gd)$ and
    $\exprs{d(\Hd_j)} \subseteq \TD(\Hd)$.  So, by definition of the tupling
    of expansions
    $\exprs{d(\Gd_i)\tuple d(\Hd_j)} \subseteq \TD(\Gd)\tuple \TD(\Hd) =
    \TD(\Ed)$.

    We have $d(\Fd) = d(\Gd_i\tuple\Hd_j) = d(\Gd_i) \tuple d(\Hd_j)$, so
    $\exprs{d(\Fd)} = \exprs{d(\Gd_i) \tuple d(\Hd_j)} \subseteq \TD(\Ed)$.
    \qed
  \end{description}
\end{proof}

\begin{lemma}[Derived Terms and Repeated Expansions]
  Let $\Ed$ be an expression.  For all $\Fd \in \D(\Ed)$,
  $\exprs{d(\Fd)} \subseteq \TD(\Ed)$.
\end{lemma}
\begin{proof}
  Since $\D(\Ed) = \TD(\Ed) \cup \{\Ed\}$, this is an immediate consequence
  of \cref{lem:tdtxpn:1,lem:tdtxpn:*}.
\end{proof}

\subsection{Multitape Derivatives}
\label{sec:derivatives}

We reproduce here the definition of constant terms and derivatives from
Lombardy et al \citep[p.~148 and Def.~2]{lombardy.2005.tcs}, with our
notations and covering multitape expressions.  To facilitate reading,
weights such as the constant term are written in angle brackets, although so
far this was reserved to syntactic constructs.

\begin{definition}[Constant Term and Derivative]
  \label{def:ctder}
  \begin{align}
    \label{eq:der:cst}
    c(\zed) &\coloneqq \bra{\zeK}, &
    \da{\zed} &\coloneqq \zed,
    \\
    \notag
    c(\und) & \coloneqq \bra{\unK}, &
    \da{\und}& \coloneqq \zed,
    \\
    \label{eq:der:label}
    c(a) &\coloneqq \bra{\zeK}, \forall a \in A, &
    \da{b} &\coloneqq
      \und \text{ if $b = a$, }
      \zed \text{ otherwise,}
    \displaybreak[0]
    \\
    \label{eq:der:add}
    c(\Ed+\Fd) &\coloneqq c(\Ed) + c(\Fd), &
    \da{(\Ed + \Fd)} &\coloneqq \da{\Ed} \oplus \da{\Fd},
    \displaybreak[0]
    \\
    \label{eq:der:lmul}
    c(\lmul{k}{\Ed}) &\coloneqq \lmul{k}{c(\Ed)}, &
    \da{(\lmul{k}{\Ed})} &\coloneqq \lmul{k}{\left(\da{\Ed}\right)},
    \displaybreak[0]
    \\
    \label{eq:der:mul}
    c(\Ed \cdot \Fd) &\coloneqq c(\Ed) \cdot c(\Fd), &
    \da{(\Ed \cdot \Fd)} &\coloneqq \left(\da{\Ed}\right)\cdot \Fd \oplus \lmul{c(\Ed)}{\da{\Fd}},
    \\
    \label{eq:der:star}
    c(\Ed^*) &\coloneqq c(\Ed)^*,&
    \da{\Ed^*} &\coloneqq \lmul{c(\Ed)^*}{\left(\da{\Ed}\right)\cdot \Ed^*},
    \\
    \label{eq:der:tuple}
    c(\Ed \tuple \Fd)     & \coloneqq c(\Ed) \cdot c(\Fd),
    & \derivative{a|b}{(\Ed \tuple \Fd)},
    & \coloneqq \da{\Ed}\tuple\db{\Fd},\\ \notag
    && \derivative{a|\eword}{(\Ed \tuple \Fd)},
    & \coloneqq \lmul{c(\Fd)}{(\da{\Ed} \tuple \und)}, \\ \notag
    &&\derivative{\eword|b}{(\Ed \tuple \Fd)},
    & \coloneqq \lmul{c(\Ed)}{(\und \tuple \db{\Fd})}.
  \end{align}
  where \cref{eq:der:star} applies iff $c(\Ed)^*$ is defined in $\K$.
\end{definition}

\begin{lemma}
  \label{prop:expa:der}
  For any expression $\Ed$, $d(\Ed)(\eword) = c(\Ed)$, and
  $d(\Ed)(a) = \da{\Ed}$.
\end{lemma}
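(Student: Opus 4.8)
The plan is to prove both equalities simultaneously by structural induction on $\Ed$, exploiting the fact that the recursive definition of $d$ in \cref{def:expa-of-expr} and the recursive definition of $c$ and $\derivative{\cdot}{\cdot}$ in \cref{def:ctder} are built from exactly the same primitives. The key observation is that each operation on expansions used to define $d$ restricts, on the constant-term component, to the corresponding operation used to define $c$, and on the ``$a$-component'' (the polynomial $\Xd(a)$) to the corresponding operation used to define $\derivative{a}{\cdot}$. So I would first record, as small sub-lemmas or inline remarks, the componentwise behaviour of the expansion operations: for $\Xd \oplus \Yd$ one has $(\Xd \oplus \Yd)(\eword) = \Xd(\eword) + \Yd(\eword)$ and $(\Xd \oplus \Yd)(a) = \Xd(a) \oplus \Yd(a)$ (directly from \cref{eq:epn:plus:epn}); for $\lmul{k}{\Xd}$, the constant term is $k\,\Xd(\eword)$ and the $a$-component is $\lmul{k}{\Xd(a)}$; similarly for $\rmul{\Xd}{k}$; for $\Xd \cdot \Ed$ with $\Xd$ proper, the constant term is $\zeK$ and the $a$-component is $\Xd(a)\cdot\Ed$; and for $\Xd \tuple \Yd$, reading off \cref{eq:epn:tuple:epn}, the constant term is $\Xd(\eword)\,\Yd(\eword)$, while the $(a|b)$-component is $\Xd(a)\tuple\Yd(b)$, the $(a|\eword)$-component is $\lmul{\Yd(\eword)}{(\Xd(a)\tuple\und)}$, and the $(\eword|b)$-component is $\lmul{\Xd(\eword)}{(\und\tuple\Yd(b))}$.

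With these componentwise identities in hand, the induction itself is routine. The base cases $\zed$, $\und$, $a$ are read off directly: $d(\zed)(\eword) = \zeK = c(\zed)$ and $d(\zed)(a) = \zed = \da{\zed}$ (no firsts), and likewise for $\und$; for a label $b$, $d(b) = b\odot[\Lmul{\unK}{\und}]$ gives $d(b)(\eword) = \zeK = c(b)$ and $d(b)(b) = \Lmul{\unK}{\und}$, whose projection is $\und = \db{b}$, while $d(b)(a) = \zed = \da{b}$ for $a \ne b$. For each compound case I would apply the definition of $d$ in \cref{def:expa-of-expr}, push the evaluation at $\eword$ (resp.\ at $a$, or $a|b$) through using the componentwise identity just established, and then replace the resulting $d(\Gd)(\eword)$ by $c(\Gd)$ and $d(\Gd)(a)$ by $\da{\Gd}$ via the induction hypothesis; comparison with the matching clause of \cref{def:ctder} closes the case. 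The multiplication case uses $d(\Ed\cdot\Fd) = \ep(\Ed)\cdot\Fd \oplus \lmul{\ec(\Ed)}{d(\Fd)}$, noting that $\ep(\Ed) = d(\Ed)_p$ is proper so its constant term is $\zeK$, giving constant term $\ec(\Ed)\,\ec(\Fd) = c(\Ed)c(\Fd)$ and $a$-component $(\da{\Ed})\cdot\Fd \oplus \lmul{c(\Ed)}{\da{\Fd}}$; the star case uses \cref{eq:epn:star} analogously, with the side condition that $c(\Ed)^*$ be defined matching the hypothesis on \cref{eq:der:star}.

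The one case requiring genuine care is the tuple case, \cref{eq:epn:tuple} versus \cref{eq:der:tuple}: here $d(\Ed\tuple\Fd) = d(\Ed)\tuple d(\Fd)$, and I must match the three families of generators $a|b$, $a|\eword$, $\eword|b$ produced by \cref{eq:epn:tuple:epn} against the three derivative clauses. Evaluating at $\eword$ gives $d(\Ed)(\eword)\,d(\Fd)(\eword) = c(\Ed)c(\Fd)$; evaluating at $a|b$ with $a,b$ proper gives $d(\Ed)(a)\tuple d(\Fd)(b)$, whose projection (using \cref{lem:poly:ops} to identify projection of a tupled polynomial with the corresponding expression tuple) is $\da{\Ed}\tuple\db{\Fd}$; and the mixed generators give the two $\lmul{c(\cdot)}{(\cdots\tuple\und)}$ expressions. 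The main obstacle, such as it is, is bookkeeping: being careful that ``projection of a polynomial'' is the operation implicitly used when we write $d(\Ed)(a) = \da{\Ed}$ (an equality of expressions, obtained from the polynomial $d(\Ed)(a)$ via $\expr{\cdot}$), and that trivial identities may collapse some generators, so the claimed equalities should be read up to the trivial identities and canonicalisation already built into expressions, polynomials, and expansions — exactly as elsewhere in the paper. Once that reading is fixed, the verification is, as the paper says, ``a simple verification of \cref{def:expa-of-expr}.''
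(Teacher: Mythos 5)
Your proposal is correct and follows essentially the same route as the paper: a structural induction on $\Ed$ in which each clause of \cref{def:expa-of-expr} is matched against the corresponding clause of \cref{def:ctder}, with the tuple case resolved via \cref{eq:epn:tuple:epn}. You merely spell out the componentwise bookkeeping that the paper's one-line proof leaves implicit.
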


\begin{proof}
  A straightforward induction on $\Ed$.  The cases of constants and letters
  are immediate consequences of \cref{eq:der:cst,eq:der:label} on the one
  hand, and \cref{eq:epn:cst} on the other hand.  Equation \Cref{eq:epn:add}
  matches \cref{eq:der:add,eq:der:lmul}.  Multiplication (concatenation) is
  again barely a change of notation between \cref{eq:epn:mul} and
  \cref{eq:der:mul}, and likewise for the Kleene star
  (\cref{eq:epn:star,eq:der:star}) and tuple
  (\cref{eq:epn:tuple,eq:der:tuple}, using \cref{eq:epn:tuple:epn}).  \qed
\end{proof}

Note that, if we were to define the derivative with respect to the empty
word as the constant term, i.e.,
$\derivative{\eword}{\Ed} \coloneqq c(\Ed)$, then the previous definition
would simplify, for some operators, to:
\newcommand{\dl}[1]{\derivative{\ell}{#1}}
\begin{align*}
  \dl{(\Ed+\Fd)} &\coloneqq \dl{\Ed} + \dl{\Fd},
  \\
  \dl{(\lmul{k}{\Ed})} &\coloneqq \lmul{k}{(\dl{\Ed})},
  \\
  \derivative{\ell\tuple\ell'}(\Ed \tuple \Fd)  & \coloneqq \derivative{\ell}(\Ed) \tuple \derivative{\ell'}(\Fd).
\end{align*}
where for any weights $k, k', k \tuple k' \coloneqq k \cdot k'$.
\end{longenv}
\end{document}

